\newtheorem{theorem}{Theorem}
\newtheorem{corollary}[theorem]{Corollary}
\newtheorem{lemma}[theorem]{Lemma}
\newtheorem{proposition}[theorem]{Proposition}
\newtheorem{remark}[theorem]{Remark}
\title{Epidemics from the Eye of the Pathogen}
\author
{Faryad Darabi Sahneh$^{1}$, William Fries$^{2}$, Joseph C. Watkins$^{1,2,3,4}$, \\Joceline Lega$^{1,3,4,\ast}$\\
\\
\normalsize{$^{1}$ Department of Mathematics, University of Arizona,}\\
\normalsize{$^{2}$ Interdisciplinary Program in Applied Mathematics, University of Arizona,}\\
\normalsize{$^{3}$ Department of Epidemiology and Biostatistics, University of Arizona,}\\
\normalsize{$^{4}$ BIO5 Institute, University of Arizona,}\\
\normalsize{Tucson, AZ, 85721 USA}\\
\normalsize{$^\ast$To whom correspondence should be addressed; E-mail:  lega@math.arizona.edu.}}
\date{}
\begin{document} 
\maketitle

\begin{abstract}
 While a common trend in disease modeling is to develop models of increasing complexity, it was recently pointed out that outbreaks appear remarkably simple when viewed in the incidence vs. cumulative cases (ICC) plane. This article details the theory behind this phenomenon by analyzing the stochastic SIR (Susceptible, Infected, Recovered) model in the cumulative cases domain. We prove that the Markov chain associated with this model reduces, in the ICC plane, to a pure birth chain for the cumulative number of cases, whose limit leads to an independent increments Gaussian process that fluctuates about a deterministic ICC curve. We calculate the associated variance and quantify the additional variability due to estimating incidence over a finite period of time. We also illustrate the universality brought forth by the ICC concept on real-world data for Influenza A and for the COVID-19 outbreak in Arizona.
\end{abstract}

\section{Introduction: Outbreaks beyond the time domain and the ICC perspective}

As evidenced by the COVID-19 pandemic, societies throughout the world are highly vulnerable to disease outbreaks \cite{Morens13}. To understand the mechanism involved in disease spread and eventually provide a framework for effective public health guidance, scientists have developed numerous mathematical, statistical, and computational models of infectious disease dynamics \cite{Mckendrick1927,WALTERS20181}. But a dilemma quickly emerges: because disease spread is inherently complex, realistic descriptions commonly rely on a large number of parameters that are often unidentifiable or difficult to estimate, thereby leading to huge uncertainty in associated forecasts \cite{Edeling20}. As is typically the case with nonlinear systems, reducing the dynamics to a core nonlinear model and quantifying the associated uncertainty should provide a viable compromise between complexity and simplicity. The ICC approach \cite{Lega2016,Lega2020} introduces such a framework and, as illustrated in Figure \ref{fig:fig1}, uncovers what appears to be a generic property of outbreak data.

\begin{figure}
\centering
\includegraphics[width=\linewidth]{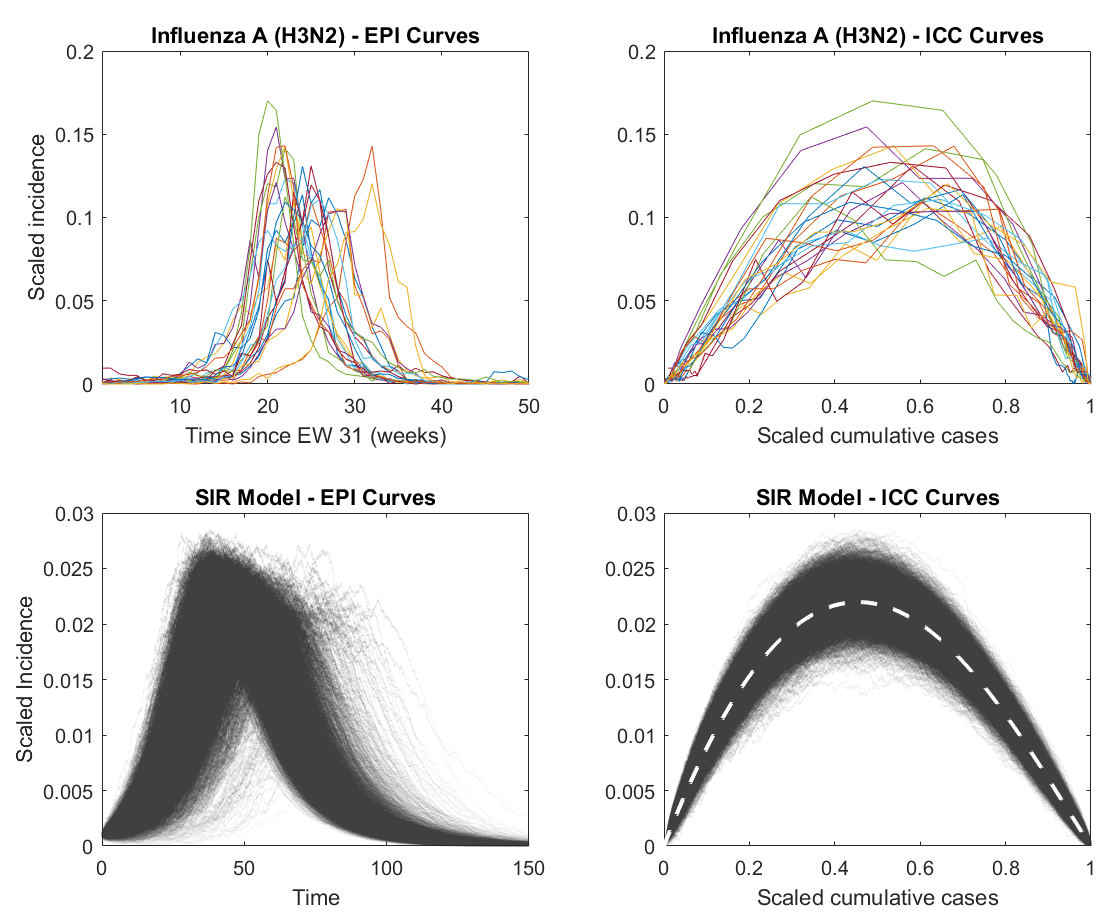}
\caption{
{\bf Top Row, left:} weekly incidence ${\mathcal I}/\hat C_\infty$ plotted as a function of time, for influenza A (H3N2) outbreaks that took place in the US between 1998 and 2019, and were of final size $\hat C_\infty > 3000$ cases. Each curve corresponds to one flu season in an HHS region. Time is measured in weeks from epidemiological week (EW) 31 of each year. The data were downloaded from the CDC Fluview database using the R {\tt cdcfluview} package \cite{CDC_fluview}. {\bf Top row, right:} the same curves plotted in the ICC plane, showing ${\mathcal I} / \hat C_\infty$ as a function of scaled cumulative cases $C / \hat C_\infty$.
{\bf Bottom row, left:} EPI curves for 5997 runs of a stochastic SIR model with size $N=2500$ and $R_0=2$. {\bf Bottom row, right:} Corresponding ICC curves, showing ${\mathcal I} / \hat C_\infty$ as a function of $C /\hat C_\infty$. The white dashed curve corresponds to Equation \ref{eq:ICC}, scaled to the expected final size $C_\infty$ of the outbreak ($C_\infty / N$ is the non-zero root of the right-hand side of Eq. \ref{eq:ICC} with $c_0$ set to 0). For the stochastic SIR model, ${\mathcal I}$ is defined as the random variable $\beta I S$ (see text for details).
\label{fig:fig1}}
\end{figure}

In most instances, the independent variable underlying the course of an epidemic is time: health authorities report numbers of new cases and deaths per day or week, forming what is commonly called an epidemiological (EPI) curve (see examples in the top left panel of Figure \ref{fig:fig1}); and modelers fit their models to this same EPI curve.  However, time – as we measure it – is not intrinsic to the spread dynamics of the pathogen. As such, focusing on temporal aspects obscure relevant properties of these
dynamics, thereby making it more difficult to fit models to data. The ICC viewpoint \cite{Lega2016,Lega2020} suggests replacing time with a monotonic, nonlinear function thereof: cumulative cases.
Therefore, in contrast to EPI curves, which describe how humans perceive outbreaks as time unfolds, ICC curves emphasize the pathogen's perspective centered on the number of people infected (i.e. the resources that have been consumed so far).
 
Figure \ref{fig:fig1} illustrates how these ideas can reveal important traits shared by different outbreaks associated with the same pathogen. The left plot of the top row shows the EPI curves of the 24 Inluenza A (H3N2) outbreaks that took place in US HHS regions between 1998 and 2019 and led to more than 3000 confirmed cases. No specific properties of these curves are readily observable, because the peak timing and peak height vary between seasons. However, when the same curves are plotted in the incidence vs.\ cumulative cases (ICC) plane, a structure emerges (top right panel), revealing similarities between each season that, as we will see, are characteristic of the disease itself. To emphasize that such properties are generic, the bottom row of Figure \ref{fig:fig1} shows similar results for multiple realizations of a stochastic SIR (Susceptible, Infected, Recovered) model via simulations on a complete graph.  Again, the universality normally hidden behind classical EPI curves (Figure \ref{fig:fig1} bottom row, left) becomes evident once time is removed from the picture and the independent variable is replaced with cumulative cases (Figure \ref{fig:fig1} bottom row, right). Incidence is defined  as $\beta I S$, which for the deterministic SIR model equals $d C / dt$. Here, $\beta$ is the microscopic contact rate of the disease, $I$ is the number of infected individuals, $S$ is the number of susceptible individuals, and $C$ is the cumulative number of cases. The parameter $\beta$ represents the probability that a given susceptible individual will encounter a {\em specific} infected individual in a population of size $n$ and therefore scales like $1/n$. Below we will introduce the population-level contact rate, $\beta_P = \beta n$, which remains finite as $n \to \infty$. For deterministic systems, an ICC curve is therefore the graph of $d C / dt$ as a function of $C$. In a discrete setting, the reported incidence is the number of new cases $\Delta C$ that occurred over a fixed period of time $\Delta$ and the incidence per unit of time is $\Delta C / \Delta$. Because disease incidence is a function of time and cumulative cases are monotonically increasing with time, the ICC curve, like the EPI curve, is {\em always} the graph of a function defined on integer values of $C$. In addition, because incidence decreases to zero between separate waves of disease spread, and consequently the cumulative cases plateau during the same periods of time, each wave of an outbreak corresponds to one "hump" (as shown in the right column of Figure \ref{fig:fig1}) of the ICC curve. One of the results of the present work is that the ICC curve of an outbreak described by the stochastic SIR model (corresponding to any of the black curves in the bottom right panel of Figure \ref{fig:fig1}) fluctuates about a mean ICC curve given by the deterministic SIR model (leading to the white dashed line in the same panel).

Dynamical systems theory has long promoted such a phase portrait perspective as displayed in the right panels of Figure \ref{fig:fig1}, since it can provide both intuitive insights and analytical approaches not easily identified under the time domain description. In \cite{Lega2016}, Lega \& Brown advocated for the relevance of this viewpoint in disease modeling; they pointed out that in many instances epidemiological data appear to follow a parabolic ICC curve, thereby suggesting that the logistic equation is a good model for the overall dynamics of $C$ as a function of time. This provided context to earlier works, in which the relevance of the logistic equation to the spread of Ebola in Africa had been noted \cite{Chowell14,Pell16}. In \cite{Lega2020}, Lega proved that the deterministic SIR compartmental model \cite{Mckendrick1927} has an exact ICC curve, whose shape is almost parabolic. The present work goes beyond the macroscopic picture provided by deterministic approaches. We analyze the statistical properties of the stochastic SIR model and explain the origins of the ICC curve from microscopic stochastic interactions.

The rest of this article is organized as follows. Section \ref{sec:stoc_SIR} introduces the stochastic SIR model and establishes that, in the limit of large populations, a single realization of this model fluctuates about the deterministic SIR ICC curve. Section \ref{sec:fluct} builds on these results to prove that the stochastic SIR model defines a Gaussian process with independent increments in the ICC plane. We quantify the associated variance, provide an elegant way of recovering a known formula for the distribution of the final size of an outbreak, find the distribution of incidence at expected disease peak, and discuss the added variability due to the difference quotient nature of the reported incidence. Section \ref{app:COVID_AZ} illustrates what some of the ideas discussed in this manuscript mean for real outbreak data. Section \ref{sec:conclusions} summarizes our results and reviews their potential applications to the analysis of outbreak data.

\section{The Stochastic SIR model and a Functional Law of Large Numbers}
\label{sec:stoc_SIR}

The SIR model consists of three compartments representing individuals susceptible of catching the disease (S), those who have the disease and are infectious (I), and those who have recovered (R) and can not longer infect others. In the stochastic version, the size of each compartment evolves according to a continuous time Markov process \cite{bartlett1949} involving the two transitions described in Table \ref{Tb:SIR}. 
Here, $n$ is the number of individuals in the population, $n_S$, $n_I$, and $n_R$ are the number of susceptible, infective, and recovered individuals respectively, and $n_C = n_I+n_R=n-n_S$ is the number of cases. The parameters $\beta$ and $\gamma$ are the individual contact and recovery rates of the disease, respectively. As mentioned above, $\beta$ scales like $1/n$.

\begin{table}[ht]
\begin{center}
\begin{tabular}{c|c|c}
event&transition&rate \\ \hline
infection&$(n_S,n_I,n_R)\to(n_S-1,n_I+1,n_R)$&$\beta n_Sn_I$ \\
recovery&$(n_S,n_I,n_R)\to(n_S,n_I-1,n_R+1)$&$\gamma n_I$ \\ \hline
\end{tabular}
\medskip
\caption{\label{Tb:SIR} \smallskip \noindent Continuous-time Markov process associated with the SIR model. The parameter $\beta$ scales like $1/n$, where $n = n_S + n_I + n_R$ is the total population size, whereas the recovery rate $\gamma$ is independent of $n$.
}
\end{center}
\end{table}

The ICC curve was developed to determine a direct  relationship between incidence and the number of cases. For the deterministic SIR model, it reads \cite{Lega2020}
\begin{equation}
    \label{eq:ICC}
\frac{dc}{dt}= \beta_P \left(c+ \frac{1}{R_0} \ln\left(1-c\right) - \frac{1}{R_0} \ln\left(1 - c_0\right)\right)
\left(1 - c\right) = G(c, c_0),
\end{equation}
where $c = n_C / n$. The {\bf population-level contact rate} $\beta_P$ is the {\bf individual-level contact rate} $\beta$ times the population size $n$. $R_0$ is the basic reproductive number, and $c_0$, the initial condition for $c$, is positive and small. Both $\beta_P$ and $R_0 = \beta_P / \gamma$ are independent of $n$ and therefore remain finite in the limit of large population sizes. The goal of this section is to prove that a relationship analogous to \eqref{eq:ICC} can be found by representing the stochastic SIR model as a multiparameter random time change (see \cite{ethier2009markov}, Section 6.2).

\subsection{Multiparameter Random Time Change Representation}
A time homogeneous pure jump Markov process on a finite state space can be represented using an appropriate number of independent rate one Poisson processes, one for each type of jump. The rate associated to any given Poisson process is random and based on the current state of the process.  Consequently, the multiparameter time change representation for the stochastic SIR model requires two Poisson processes,  $Y_i, i=1,2$, one for {\bf infection} and one for {\bf recovery}. Thus, we write the stochastic SIR model $(N_S,N_I,N_R)$ as 
\begin{eqnarray}\label{mpsir} \nonumber
N_S(t) & =& N_S(0) - Y_1\left(\int_0^t\beta N_S(u)N_I(u) du \right), \\ \nonumber
N_I(t) & =& N_I(0) + Y_1\left(\int_0^t\beta N_S(u)N_I(u) du \right)- Y_2\left(\int_0^t\gamma N_I(u)
du \right),\\
N_R(t) & =& N_R(0) +  Y_2\left(\int_0^t\gamma N_I(u) du \right).\\ \nonumber
\end{eqnarray}
As is shown in Section 6.4 of \cite{ethier2009markov}, the system of equations in (\ref{mpsir}) has a unique solution and is the SIR model introduced in Table 1. The cumulative number of cases $N_C(t)=N_I(t) + N_R(t)$ satisfies
\begin{eqnarray*}
N_C(t) &= &N_C(0) + Y_1\left(\int_0^t\beta N_S(u)N_I(u) du\right)\\ 
&= &N_C(0) + Y_1\left(\int_0^t\beta (n-N_C(u))N_I(u) du\right). 
\end{eqnarray*}
Now, taking advantage of the independent increments of the Poisson process, we may write
\begin{align}\label{cincrement}
&N_C(t+\Delta) - N_C(t) \nonumber \\
&= Y_1\left(\int_0^{t+\Delta}\beta (n-N_C(u))N_I(u) du\right) - Y_1\left(\int_0^t\beta (n-N_C(u))N_I(u) du\right) \nonumber \\
&=\tilde Y_1\left(\int_t^{t+\Delta}\beta (n-N_C(u))N_I(u) du\right),
\end{align}
where $\tilde Y_1$ is also a unit rate Poisson process. As a consequence, we have the following lemma.

\begin{lemma}\label{lemma:cdot} The rate of increase in the expected number of cases
\begin{displaymath}
\dot{C}(n_C)=\frac{d\ }{d\Delta} E[N_C(t+\Delta)-N_C(t)|N_C(t)=n_C]\Big|_{\Delta=0}
\end{displaymath}
satisfies the equation
\begin{equation}\label{cdot}
\dot{C}(n_C)= E[\beta N_I(t)(n-n_C)| N_C(t)=n_C]= \beta E[N_I(t)|N_C(t)=n_C](n-n_C).
\end{equation}
\end{lemma}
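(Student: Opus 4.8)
The plan is to extract the instantaneous rate directly from the increment identity \eqref{cincrement}, by first taking a conditional expectation and then differentiating at $\Delta=0$. Introduce the integrated infection rate $\Lambda(t)=\int_0^t\beta\,(n-N_C(u))\,N_I(u)\,du$, so that $N_C(t)=N_C(0)+Y_1(\Lambda(t))$. Since the process lives on a finite state space, the rate $\beta\,(n-N_C(u))\,N_I(u)$ is bounded by the deterministic constant $\beta n^2$, and hence $\Lambda$ is finite, continuous, nondecreasing, and adapted to the natural filtration $(\mathcal F_t)$ of the process. From the random time change representation \eqref{mpsir} (\cite{ethier2009markov}, Section~6.2) --- equivalently, from Dynkin's formula applied to the coordinate function $(n_S,n_I)\mapsto n-n_S$ --- the compensated process $N_C(t)-\Lambda(t)=N_C(0)+\bigl(Y_1(\Lambda(t))-\Lambda(t)\bigr)$ is an $(\mathcal F_t)$-martingale. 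Taking conditional expectations in \eqref{cincrement} given $\mathcal F_t$ therefore gives
\[
E\bigl[N_C(t+\Delta)-N_C(t)\,\big|\,\mathcal F_t\bigr]
= E\Bigl[\int_t^{t+\Delta}\beta\,(n-N_C(u))\,N_I(u)\,du\,\Big|\,\mathcal F_t\Bigr].
\]

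Because $\sigma(N_C(t))\subset\mathcal F_t$, the tower property lets me replace the conditioning on $\mathcal F_t$ by the conditioning on the event $\{N_C(t)=n_C\}$, so that
\[
E\bigl[N_C(t+\Delta)-N_C(t)\,\big|\,N_C(t)=n_C\bigr]
= E\Bigl[\int_t^{t+\Delta}\beta\,(n-N_C(u))\,N_I(u)\,du\,\Big|\,N_C(t)=n_C\Bigr].
\]
I then differentiate at $\Delta=0$. The difference quotient $\tfrac{1}{\Delta}\int_t^{t+\Delta}\beta\,(n-N_C(u))\,N_I(u)\,du$ is uniformly bounded by $\beta n^2$ and, by right-continuity of the sample paths, converges pathwise to $\beta\,(n-N_C(t))\,N_I(t)$ as $\Delta\downarrow 0$; dominated convergence then justifies interchanging $d/d\Delta$ with the conditional expectation, yielding
\[
\dot C(n_C)=E\bigl[\beta\,N_I(t)\,(n-N_C(t))\,\big|\,N_C(t)=n_C\bigr]
=\beta\,(n-n_C)\,E\bigl[N_I(t)\,\big|\,N_C(t)=n_C\bigr],
\]
since $n-N_C(t)=n-n_C$ on the conditioning event. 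This is precisely \eqref{cdot}.

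The step requiring the most care is the claim that $Y_1(\Lambda(t))-\Lambda(t)$ is a true martingale rather than merely a local one: in \eqref{cincrement} the unit-rate Poisson process $\tilde Y_1$ and the random upper limit of integration are not independent --- the post-$t$ trajectory of $(N_S,N_I)$ is itself driven by $\tilde Y_1$ --- so one cannot conclude $E[\tilde Y_1(A)]=E[A]$ from independence alone. Finiteness of the state space is exactly what resolves this: it bounds the rate deterministically, hence $E[\Lambda(t)]<\infty$, the time-changed compensated Poisson process is integrable, and the optional-sampling (Wald) identity applies. The remaining ingredients --- the two uses of the tower property and the dominated-convergence differentiation --- are routine once this integrability is secured.
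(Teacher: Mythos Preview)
Your proof is correct and follows essentially the same route as the paper: compute the conditional expectation of the increment \eqref{cincrement} as an integral of the rate, then differentiate at $\Delta=0$. The paper's argument is terser --- it passes directly from $E[\tilde Y_1(A)\mid N_C(t)=n_C]$ to $E[A\mid N_C(t)=n_C]$ without comment --- whereas you correctly flag that this step is not a consequence of independence (the random upper limit depends on the very Poisson process being evaluated) and instead justify it via the martingale property of the compensated process together with the deterministic rate bound $\beta n^2$ from the finite state space. Your use of dominated convergence for the $\Delta\downarrow 0$ limit likewise makes explicit what the paper leaves to ``divide by $\Delta$ and let $\Delta\to 0$.'' So the two proofs are the same in outline; yours supplies the integrability and limit-interchange arguments the paper omits.
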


\begin{proof}
The conditional mean of the increment in  (\ref{cincrement}) is given by
\begin{align*}
&E[N_C(t+\Delta) - N_C(t)|N_C(t)=n_C]\\ 
&= E\left[\tilde Y_1\left(\int_t^{t+\Delta}\beta (n-N_C(u))N_I(u) du\right)\Big|N_C(t)=n_C\right] \\
&= \int_t^{t+\Delta}\beta E[(n-N_C(u))N_I(u)|N_C(t)=n_C]\, du
\end{align*}
Now divide by $\Delta$ and let $\Delta\to 0$.
\end{proof}
Lemma \ref{lemma:cdot} relates ${\dot C}(n_C)$ to the conditional expectation of $\beta\, n_I (n - n_C) = \beta\, n_I\, n_S$. We call the random variable ${\mathcal I} = \beta\, n_I\, n_S$ the ``macroscopic incidence.'' Our next step is to
find a formula for $E[N_I(t)|N_C(t)=n_C]$, the mean number of infective individuals given the number of cases. This relationship can be understood by examining the underlying discrete time Markov chain.

\subsection{Underlying discrete time Markov chain}
By the Doob-Gillespie algorithm \cite{Gillespie1976} and \cite{breiman1968} Section 15.6, a time-homogeneous pure-jump Markov process consists of two independent parts.
\begin{enumerate}
\item The length of time that the process remains in its current state is exponentially distributed with parameter value depending only on the current state, equal to the sum of the rates listed in the above table.
\item The jumps form an underlying time-homogeneous discrete time Markov chain.
\end{enumerate}
For the SIR model, the underlying discrete time Markov chain has two transitions, with probabilities listed in the table below.
\medskip
\begin{center}
\begin{tabular}{c|c|c}
event&transition&probability \\ \hline
infection&$(n_S,n_I,n_R)\to(n_S-1,n_I+1,n_R)$&$\beta n_S\, n_I/(\beta n_S \, n_I+\gamma n_I)$ \\
&& $ = \beta n_S/(\beta n_S+\gamma)$ \\
recovery&$(n_S,n_I,n_R)\to(n_S,n_I-1,n_R+1)$&$\gamma n_I/(\beta n_{S} \,n_I+\gamma n_I)$ \\ 
&&$ = \gamma/(\beta n_{S} +\gamma)$ \\ \hline
\end{tabular}
\end{center}
\medskip
\noindent Note that the probabilities in the last column do not depend on $n_I$ when $n_I>0$. Choosing state space variables $n_C$ and $n_I$, we recast the Markov chain transitions in terms of the total population $n$ and the number of cases $n_C$, leading to the following table.
\medskip
\begin{center}
\begin{tabular}{c|c|c}
event&transition&probability \\ \hline
infection&$(n_C,n_I)\to(n_C+1,n_I+1)$&$p(n_C)=\beta (n-n_C)/(\beta (n-n_C)+\gamma)$ \\
recovery&$(n_C,n_I)\to(n_C,n_I-1)$&$1-p(n_C)=\gamma/(\beta (n-n_C)+\gamma)$ \\ \hline
\end{tabular}
\end{center}
\medskip
\noindent Since $n$ is given, the above probabilities {\em only} depend on $n_C$, the number of cases that have occurred since the beginning of the outbreak. Using the expression for the basic reproduction number, $R_0 =  n\beta/\gamma=\beta_P/ \gamma$, we can also write
$$p(n_C) = \frac{R_0(n-n_C)/n}{R_0(n-n_C)/n+1}.$$
Consequently, we can denote the underlying Markov chain by $C_j,\ j=0,1,\ldots$ for the total number of cases at the $j$-th event. The ability to cast the Markov chain for cases alone with the number of infectives playing no role mirrors the property that the dynamics of the deterministic SIR model is completely described by a first-order differential equations for $C(t)$
\cite{Lega2020}. Note that $C_j$ is a pure birth chain with a jump up with each new infection.  
This Markov chain has a single parameter, namely $R_0$, which is a characteristic of the outbreak and independent of the population size $n$. In terms of statistical inference, the ratio that leads to the probabilities $p(n_C)$ shows that the parameter $\beta$ is {\bf ancillary} to the dynamics (see \cite{ghosh2010} for the properties of ancillary statistics). 

\subsection{The mean for the number of infected individuals}
\label{app:distributions}
We are now prepared to investigate properties of the distribution of $I_j$, the number of infected individuals at the $j$th event, when the number of cases is known. 
To this end, note that with $C_j=n_C$
\begin{equation} \label{itnc}
I_j = n_C - \left(j - n_C\right) = 2n_C-j
\end{equation}
since there have been $n_C$ infections in $j$ steps, and thus $j-n_C$ recoveries. Also note that the nature of the chain is such that $C_0=0$ and $C_1 = 1$.
Next, let
$$\tau_{n_C} = \min\{j; C_j =n_C\}$$  
denote the number of steps in the discrete Markov chain needed to reach $n_C$ cases, which is also known as a hitting time of the Markov chain. Then, $$I_{\tau_{n_C}} = 2n_C-\tau_{n_C}.$$
This shows that if we can determine the distribution of $\tau_{n_C}$, then we can also determine the distribution of $I_{\tau_{n_C}}$.
\begin{theorem}\label{limE} The expectation of $\tau_{n_C}$ satisfies
$$\lim_{n\to\infty} \frac{1}{n}E\tau_{n_C} = c-\frac{1}{R_0} \ln(1-c)$$
and consequently,
$$\lim_{n\to\infty} \frac{1}{n}EI_{\tau_{n_C}} =c+\frac{1}{R_0} \ln(1-c),$$
where $c=n_C/n$. 
\end{theorem}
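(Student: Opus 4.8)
The plan is to break the hitting time $\tau_{n_C}$ into the successive waiting times for each new case and to evaluate the resulting sum of means as a Riemann sum. For $1\le m\le n_C$ let $\sigma_m$ be the number of steps of the discrete chain between the first visit to a state with $m-1$ cases and the first visit to a state with $m$ cases, so that $\tau_{n_C}=\sum_{m=1}^{n_C}\sigma_m$. The structural fact I would use is the one highlighted in the last transition table: while the chain holds $m-1$ cases, every step is an infection (which terminates $\sigma_m$) with probability $p(m-1)$ and a recovery (which does not change the case count) with probability $1-p(m-1)$, and, \emph{crucially}, these probabilities do not depend on the current number of infectives as long as it is positive. Hence, setting aside the possibility that the number of infectives reaches $0$, $\sigma_m$ is geometrically distributed with parameter $p(m-1)$ and $E\sigma_m=1/p(m-1)$, with the single exception $\sigma_1=1$ recorded below \eqref{itnc}.

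Next I would substitute $p(k)=\tfrac{R_0(n-k)/n}{R_0(n-k)/n+1}$ to get $1/p(k)=1+\frac{1}{R_0}\cdot\frac{1}{1-k/n}$, so that, the boundary term from $\sigma_1=1$ contributing only $O(1)$ and thus being washed out by the $1/n$ factor,
\[
\frac1n\,E\tau_{n_C}=\frac{n_C}{n}+\frac{1}{R_0}\cdot\frac1n\sum_{k=0}^{n_C-1}\frac{1}{1-k/n}+O\!\left(\tfrac1n\right).
\]
The last sum is a Riemann sum for $\int_0^{c}\frac{dx}{1-x}=-\ln(1-c)$ with $c=n_C/n$, which gives $\lim_{n\to\infty}\frac1n E\tau_{n_C}=c-\frac1{R_0}\ln(1-c)$. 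The companion statement is then immediate from $I_{\tau_{n_C}}=2n_C-\tau_{n_C}$ in \eqref{itnc}, since $\frac1n E I_{\tau_{n_C}}=2c-\frac1n E\tau_{n_C}\to 2c-\bigl(c-\frac1{R_0}\ln(1-c)\bigr)=c+\frac1{R_0}\ln(1-c)$.

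The step I expect to be the real obstacle is precisely the parenthetical caveat above. Because every all-recovery path drives the number of infectives to $0$, one has $\tau_{n_C}=+\infty$ with positive probability, so the bare expectation $E\tau_{n_C}$ is infinite; the statement must be understood with $\tau_{n_C}$ restricted to $\{\tau_{n_C}<\infty\}$ (i.e.\ conditioned on a major outbreak), and one must check that this conditioning is harmless in the limit as long as $c$ is below the limiting final-size fraction. The upper bound is the easy direction: each $\sigma_m$ is stochastically dominated by an untruncated geometric variable of parameter $p(m-1)$ — truncation at the extinction time only shortens it — so the conditional mean of $\tau_{n_C}$ is at most $\sum_k 1/p(k)$ and $\limsup_n \frac1n E[\tau_{n_C}\mid\tau_{n_C}<\infty]\le c-\frac1{R_0}\ln(1-c)$. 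The matching lower bound requires knowing that the number of infectives stays of order $n$ throughout the relevant phase, so that the truncation is negligible; this is exactly what the classical functional law of large numbers for density-dependent Markov chains provides, by pinning $N_I/n$ near a strictly positive deterministic profile while $c$ remains below the final size. Converting that into a bound on the discrepancy between the truncated increments $\sigma_m$ and their geometric counterparts, uniformly in $m$, and then passing the limit through the sum by dominated convergence, is where I expect the bulk of the technical effort to lie.
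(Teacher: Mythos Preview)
Your core argument---writing $\tau_{n_C}$ as a sum of geometric waiting times and evaluating $\frac{1}{n}E\tau_{n_C}$ as a Riemann sum for $\int_0^c\frac{dx}{1-x}$---is exactly the paper's proof; apart from a harmless shift in how the $\sigma_m$ are indexed (you use the holding parameter $p(m-1)$, the paper uses $p(m)$ and sums one fewer term), the computations coincide line for line.

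Where you diverge is the long final paragraph about extinction. The paper does not engage with that issue at all: having observed that the jump probabilities of the embedded chain depend only on $n_C$, it simply declares $C_j$ a pure birth chain and takes $\sigma_m\sim \mathrm{Geom}_1(p(m))$ outright, so that $E\tau_{n_C}=\sum_m 1/p(m)$ is finite by construction and the proof is three lines. Your observation that in the genuine $(n_C,n_I)$ chain one has $P(\tau_{n_C}=\infty)>0$, and hence the literal expectation is infinite, is correct and is a point the paper glosses over; your proposed repair (stochastic domination by untruncated geometrics for the upper bound, the fluid limit keeping $n_I$ of order $n$ for the lower bound) is a reasonable route to make the conditioning on a major outbreak rigorous. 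But none of that appears in the paper's argument, which should be read as a statement about the idealized pure birth chain for cases.
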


\begin{proof}
A pure-birth Markov chain remains in a given state $m$ for a geometric number of steps before  making the transition to the state $m+1$. With this in mind, we can write 
\begin{equation}\label{sumsigma}
\tau_{n_C} = \sigma_1+\cdots + \sigma_{n_C-1}
\end{equation}
as the sum of independent random variables  $\sigma_m\sim Geom_1(p(m))$, where the subscript $1$ in $Geom_1(p(m))$ indicates that the the state space is $\{1,2,\ldots\}$ (rather than $\{0,1,2,\ldots\}$). Thus, $E\sigma_m = 1/p(m).$ Write 
\[
E\tau_{n_C} = \sum_{m=1}^{n_C-1} \frac{1}{p(m)} = \sum_{m=1}^{n_C-1}\frac{R_0 (n-m)/n+1}{R_0 (n-m)/n}= (n_C-1) + \frac{n}{R_0}\sum_{m=1}^{n_C-1} \frac{1}{n-m}.
\]
Then,
\begin{align*}
&\frac{1}{n}E\tau_{n_C} = c -\frac{1}{n} +\frac{1}{R_0}\sum_{m=\,1}^{n\,c-1}\frac{1}{1-m/n}\ \frac{1}{n}\\
&\to c+ \frac{1}{R_0}\int_0^{c}\frac{1}{1-q}dq = c -\frac{1}{R_0} \ln(1-c) \qquad \hbox{as}\ n\to\infty.
\end{align*}
\end{proof}

\begin{corollary}
\label{cor:mac_ICC}
The scaled rate of increase in the expected number of cases, $\dot c$ (see Lemma \ref{lemma:cdot}), satisfies
\begin{equation}\dot c = \lim_{n\to\infty} \frac{1}{n}\dot{C}([n c]) = \beta_P\left(c + \frac{1}{R_0} \ln(1-c)\right)(1-c).
\end{equation}
\end{corollary}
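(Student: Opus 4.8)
First I would combine Lemma \ref{lemma:cdot} with Theorem \ref{limE}. Since $\beta=\beta_P/n$, Lemma \ref{lemma:cdot} can be rewritten as
\begin{equation*}
\frac{1}{n}\,\dot C(n_C)=\beta_P\,\frac{n-n_C}{n}\cdot\frac{1}{n}\,E\!\left[N_I(t)\mid N_C(t)=n_C\right],
\end{equation*}
and with $n_C=[nc]$ the prefactor $(n-n_C)/n\to 1-c$ is immediate. Hence the corollary reduces to showing
\begin{equation*}
\frac{1}{n}\,E\!\left[N_I(t)\mid N_C(t)=[nc]\right]\ \longrightarrow\ c+\frac{1}{R_0}\ln(1-c),
\end{equation*}
which is precisely the second limit of Theorem \ref{limE}, once we know that the continuous-time conditional mean $E[N_I(t)\mid N_C(t)=n_C]$ agrees with $E\,I_{\tau_{n_C}}$ up to terms that are $o(n)$.

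Second I would establish that agreement. On the event $\{N_C(t)=n_C\}$ the process has reached $n_C$ cases at the hitting time $\tau_{n_C}$, where $N_I=I_{\tau_{n_C}}=2n_C-\tau_{n_C}$ by \eqref{itnc}; the case count then remains equal to $n_C$ until the next infection, so $N_I(t)$ can only have dropped by the number of recoveries that occurred in the meantime. By the embedded-chain description of Section \ref{app:distributions}, that number is stochastically dominated by a $Geom_0(p(n_C))$ variable with mean $(1-p(n_C))/p(n_C)=n/\!\left(R_0(n-n_C)\right)$, which stays bounded as $n\to\infty$ when $n_C=[nc]$ with $c<1$, since $p([nc])$ is then bounded away from $0$ and $1$. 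Thus $N_I(t)$ differs from $I_{\tau_{n_C}}$ by an $O(1)$ amount on this event, and the difference disappears after division by $n$. Substituting Theorem \ref{limE} and the limit of the prefactor into the displays above yields $\tfrac1n\dot C([nc])\to\beta_P\!\left(c+\tfrac1{R_0}\ln(1-c)\right)(1-c)$; and since the chain starts from $C_0=0$, the initial fraction is $c_0=0$, so this coincides with $G(c,0)$ from \eqref{eq:ICC}.

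The step I expect to be the main obstacle is making that replacement genuinely rigorous, because $E[N_I(t)\mid N_C(t)=n_C]$ is a priori a function of $t$ as well as of $n_C$: for fixed $t$ the event $\{N_C(t)=n_C\}$ mixes paths that are passing through $n_C$ cases at various stages of their sojourn there with, once $t$ is large, paths that have already been absorbed at exactly $n_C$ cases, for which $N_I(t)=0$. One has to argue that the absorbed contribution is negligible on the $1/n$ scale when $n_C=[nc]$ lies in the bulk — reaching $[nc]$ cases forces $I_{\tau_{n_C}}$ to be of order $n$, so the run of $I_{\tau_{n_C}}$ consecutive recoveries required to drive $N_I$ to $0$ has probability $(1-p(n_C))^{I_{\tau_{n_C}}}$, which is exponentially small — and that for the remaining paths the discrepancy between $N_I(t)$ and $I_{\tau_{n_C}}$ is controlled uniformly in $t$ by the $Geom_0(p(n_C))$ domination, which also lets one identify the actual conditional law of $\tau_{n_C}$ with the unconditioned sum of independent geometrics used in Theorem \ref{limE}. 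Once this uniform control is in place, the rest is the routine substitution of the first step.
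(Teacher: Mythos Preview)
Your approach is the paper's approach: rewrite Lemma~\ref{lemma:cdot} using $\beta=\beta_P/n$ and then feed in the limit from Theorem~\ref{limE}. The paper's proof is two lines; it simply asserts that ``the theorem above shows that $\lim_{n\to\infty}\frac{1}{n}E[N_I(t)\mid N_C(t)=nc]=c+\frac{1}{R_0}\ln(1-c)$'' and substitutes into \eqref{cdot}.

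The difference is that you have noticed something the paper silently skips: Theorem~\ref{limE} computes $\frac{1}{n}E\,I_{\tau_{n_C}}$, the expected infective count at the \emph{hitting time} of $n_C$ cases in the embedded chain, whereas \eqref{cdot} involves the continuous-time conditional mean $E[N_I(t)\mid N_C(t)=n_C]$, which in principle depends on $t$ and on how far into the sojourn at $n_C$ the process is. The paper makes no attempt to reconcile these two quantities; it simply identifies them. Your second and third paragraphs are a genuine attempt to close that gap --- bounding the discrepancy by a $Geom_0(p(n_C))$ number of intervening recoveries, and arguing that absorption at $n_C$ is exponentially unlikely in the bulk --- and that reasoning is sound in outline. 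So you are not taking a different route; you are taking the paper's route more carefully, and the obstacle you flag is real but is one the paper itself leaves unaddressed rather than resolves.
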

\begin{proof} The theorem above shows that 
$$\lim_{n\to\infty}\frac{1}{n}E[N_I(t)|N_C(t)=n\, c] = c +\frac{1}{R_0} \ln(1-c)=m_I(c),$$
where the last inequality defines $m_I(c)$. Now substitute into (\ref{cdot}) and recall that $\beta_P = n\beta$.
\end{proof}
We therefore have recovered the ICC curve  \eqref{eq:ICC} as the mean of the macroscopic incidence $\mathcal I$ in the limit as $n\to\infty$. We now turn to a description of how individual realizations of $\mathcal I$ in the stochastic SIR model fluctuate about the mean ICC curve.

\section{The statistics of fluctuations about the ICC curve}
\label{sec:fluct} 

In this section, we establish a functional central limit theorem in which the limit is an independent increments Gaussian process. 

The ingredients for a Gaussian process are a mean function and a variance-covariance function. Thus, the next task is to determine the variance structure that arises as a limit for the pure-birth Markov chain $C_j$. Recall that we set $\sigma_m\sim Geom_1(p(m))$, the number of steps that the chain remains in a given state $m$. Because the $\sigma_m$ are independent, we can use (\ref{sumsigma}) and write 
the variance of $\tau_{n_C}$ as follows.

\begin{eqnarray}\label{eq:taunc}
\hbox{Var}(\tau_{n_C}) &=& \sum_{m=1}^{n_C-1} \hbox{Var}(\sigma_m)= \sum_{m=1}^{n_C-1}\frac{1-p(m)}{p(m)^2} \nonumber \\
&=& \sum_{m=1}^{n_C-1} \frac{1/(R_0 (n-m)/n+1)}{((R_0 (n-m)/n)/(R_0 (n-m)/n+1))^2}\\
& =& \sum_{m=1}^{n_C-1}\frac{R_0 (n-m)/n+1}{R_0^2 (n-m)^2/n^2}
=\frac{n}{R_0}\sum_{m=1}^{n_C-1}\frac{1}{n-m}+\frac{n^2}{R_0^2}\sum_{m=1}^{n_C-1}\frac{1}{(n-m)^2}. \nonumber
\end{eqnarray}
Consequently, using the relationship in (\ref{itnc}).
$$\hbox{Var}(I_{\tau_{n_C}}) = \hbox{Var}(\tau_{n_C}) =\frac{n}{R_0}\sum_{m=1}^{n_C-1}\frac{1}{n-m}+\frac{n^2}{R_0^2}\sum_{m=1}^{n_C-1}\frac{1}{(n-m)^2}.$$

\begin{theorem}\label{thm:limvar}
Set $c_0=n_{C_0}/n$ and $c=n_C/n$,
\[
\lim_{n\to\infty} \frac{1}{n}(\text{Var}(I_{\tau_{n_C}})- \hbox{Var}(I_{\tau_{n_{C_0}}})) =\frac{1}{R_0} \ln\left(\frac{1-c_0}{1-c}\right) + \frac{1}{R_0^2}\frac{c-c_0}{(1-c)(1-c_0)}.
\]
\end{theorem}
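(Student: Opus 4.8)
The plan is to start from the closed form for $\text{Var}(\tau_{n_C})$ already obtained in \eqref{eq:taunc}, which equals $\text{Var}(I_{\tau_{n_C}})$ by the identity \eqref{itnc}, and simply subtract the same expression evaluated at $n_{C_0}$. Since both variances are sums of the form $\sum_{m=1}^{n_C-1}(\cdots)$, the difference collapses the index range to $\sum_{m=n_{C_0}}^{n_C-1}(\cdots)$, giving
\[
\text{Var}(I_{\tau_{n_C}}) - \text{Var}(I_{\tau_{n_{C_0}}}) = \frac{n}{R_0}\sum_{m=n_{C_0}}^{n_C-1}\frac{1}{n-m} + \frac{n^2}{R_0^2}\sum_{m=n_{C_0}}^{n_C-1}\frac{1}{(n-m)^2}.
\]
Dividing by $n$ rewrites each sum as a Riemann sum with mesh $1/n$: the first becomes $\frac{1}{R_0}\sum \frac{1}{1-m/n}\cdot\frac1n$ and the second becomes $\frac{1}{R_0^2}\sum \frac{1}{(1-m/n)^2}\cdot\frac1n$.

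Next I would pass to the limit term by term, exactly as in the proof of Theorem \ref{limE}. Using that $n_C/n\to c$ and $n_{C_0}/n\to c_0$, and that the two integrands $q\mapsto 1/(1-q)$ and $q\mapsto 1/(1-q)^2$ are continuous and bounded on the closed interval $[c_0,c]\subset[0,1)$, the Riemann sums converge to $\int_{c_0}^{c}\frac{dq}{1-q}$ and $\int_{c_0}^{c}\frac{dq}{(1-q)^2}$ respectively; the $O(1/n)$ discrepancy from replacing the integer endpoints $n_C-1$, $n_{C_0}$ by $nc$, $nc_0$ is harmless because the summands are uniformly bounded there. Evaluating, $\int_{c_0}^{c}\frac{dq}{1-q}=\ln\bigl((1-c_0)/(1-c)\bigr)$ and $\int_{c_0}^{c}\frac{dq}{(1-q)^2}=\frac{1}{1-c}-\frac{1}{1-c_0}=\frac{c-c_0}{(1-c)(1-c_0)}$, and combining these with the prefactors $1/R_0$ and $1/R_0^2$ yields the asserted formula.

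The only point that requires a little care is the second Riemann sum, whose integrand blows up as $q\to1$: one must use that $c$ is strictly below $1$ — equivalently that $n-n_C$ grows linearly in $n$ — so that the summands $1/(1-m/n)^2$ stay bounded and the convergence is uniform on $[c_0,c]$. This is the same mechanism that keeps $\text{Var}(I_{\tau_{n_C}})$ itself of order $n$, so it is exactly what one expects; once this boundedness is observed, the remainder is a routine Riemann-sum estimate of the type already carried out in the proof of Theorem \ref{limE}.
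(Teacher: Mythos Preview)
Your proposal is correct and follows exactly the approach of the paper: take the expression \eqref{eq:taunc}, form the difference, divide by $n$, recognize the resulting sums as Riemann sums for $\int_{c_0}^{c}\frac{dq}{1-q}$ and $\int_{c_0}^{c}\frac{dq}{(1-q)^2}$, and evaluate. Your write-up is in fact more detailed than the paper's own proof, which dispatches the argument in a single sentence.
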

\begin{proof}
Take the expression (\ref{eq:taunc}), divide by $n$ and notice that the two sums are Riemann sums. Take the limit to obtain the corresponding integral, which can be evaluated explicitly.
\end{proof}

\subsection{Functional central limit theorem}
\label{app:CLT}
We can turn the calculations above into a functional central limit theorem. To start, define
$$\bar{I}_c = \frac{1}{n}I_{\tau_{n_C}}, \qquad \bar\tau_c = \frac{1}{n}\tau_{n_C}.$$
Due to the fact that they are derived from sums of independent geometric random variables, both $\bar{I}_c$ and $\bar\tau_c$ have independent increments. In particular, set $c=n_C/n$ and define ${\cal F}_c$ to be the $\sigma$-algebra generated by $\{C_j; j \le \tau_{n_C}\}$. Then for $c_0<c_1$, $\bar\tau_{c_1}-\bar\tau_{c_0}$ and ${\cal F}_{c_0}$ are independent and by the basic properties of conditional expectation
$$E[\bar\tau_{c_1}-\bar\tau_{c_0}|{\cal F}_{c_0}]=E[\bar\tau_{c_1}-\bar\tau_{c_0}]=E\bar\tau_{c_1}-E\bar\tau_{c_0}.$$
Rearranging terms, 
\begin{equation}\label{taudiff}
E[\bar\tau_{c_1}-E\bar\tau_{c_1}|{\cal F}_{c_0}]=\bar\tau_{c_0}-E\bar\tau_{c_0},
\end{equation}
where we have used $E[E \bar \tau_{c_1} \vert {\cal F}_{c_0}] = E \bar \tau_{c_1}$ and $E[\bar \tau_{c_0} \vert {\cal F}_{c_0}] = \bar \tau_{c_0}$.
\begin{theorem}
Define
$$M^n_c = \sqrt{n}(\bar{I}_c - E\bar{I}_c)=-\sqrt{n}(\bar\tau_c - E\bar\tau_c).$$
and
$$A^n_c =  n\hbox{Var}(\bar{I}_c) = n\hbox{Var}(\bar\tau_c)=\hbox{Var}(M^n_c).$$ 
Then, $M^n_c$ and $(M^n_c)^2-A^n_c$ are mean zero martingales.
\end{theorem}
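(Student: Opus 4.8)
The plan is to treat both objects as discrete-parameter processes indexed by the attainable values $c=n_C/n$ (with $c\ge c_0$) relative to the filtration $\{{\cal F}_c\}$, and to exploit the independent-increments structure already recorded in Equation (\ref{sumsigma}), namely $\tau_{n_C}=\sigma_1+\cdots+\sigma_{n_C-1}$ with the $\sigma_m\sim Geom_1(p(m))$ independent. First I would dispose of integrability: a geometric random variable has moments of all orders, $\tau_{n_C}$ is a finite sum of such, and hence $M^n_c\in L^2$ and $A^n_c=n\,\hbox{Var}(\bar\tau_c)<\infty$ is well defined. Mean zero for $M^n_c$ is immediate from its definition, and mean zero for $(M^n_c)^2-A^n_c$ follows because $E(M^n_c)^2=\hbox{Var}(M^n_c)=A^n_c$, using $EM^n_c=0$ and the very definition of $A^n_c$.

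For the martingale property of $M^n_c$ there is essentially nothing to do beyond Equation (\ref{taudiff}): multiplying $E[\bar\tau_{c_1}-E\bar\tau_{c_1}\mid{\cal F}_{c_0}]=\bar\tau_{c_0}-E\bar\tau_{c_0}$ by $-\sqrt n$ gives $E[M^n_{c_1}\mid{\cal F}_{c_0}]=M^n_{c_0}$ for $c_0<c_1$. The one point worth making explicit is why (\ref{taudiff}) is valid in this indexing: ${\cal F}_{c_0}$ is generated by the chain observed up to the hitting time $\tau_{n_{C_0}}$, which carries exactly the information in $\sigma_1,\dots,\sigma_{n_{C_0}-1}$, whereas $\bar\tau_{c_1}-\bar\tau_{c_0}=\tfrac1n(\sigma_{n_{C_0}}+\cdots+\sigma_{n_C-1})$ depends only on the later, independent sojourn times; so the increment is independent of ${\cal F}_{c_0}$, as the text already asserts.

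For $(M^n_c)^2-A^n_c$ I would write, for $c_0<c_1$, $M^n_{c_1}=M^n_{c_0}+D$ with $D:=M^n_{c_1}-M^n_{c_0}$, expand
\[
(M^n_{c_1})^2=(M^n_{c_0})^2+2M^n_{c_0}D+D^2,
\]
and take $E[\,\cdot\mid{\cal F}_{c_0}]$ term by term. The first term is ${\cal F}_{c_0}$-measurable; in the second, $M^n_{c_0}$ factors out and $E[D\mid{\cal F}_{c_0}]=0$ by the martingale property just established; for the third, $D$ is independent of ${\cal F}_{c_0}$ and centered, so $E[D^2\mid{\cal F}_{c_0}]=E[D^2]=\hbox{Var}(D)$, and by independence of increments $\hbox{Var}(D)=\hbox{Var}(M^n_{c_1})-\hbox{Var}(M^n_{c_0})=A^n_{c_1}-A^n_{c_0}$. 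Collecting terms gives $E[(M^n_{c_1})^2-A^n_{c_1}\mid{\cal F}_{c_0}]=(M^n_{c_0})^2-A^n_{c_0}$.

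I do not expect a genuine obstacle: this is the standard argument that the compensated square of a square-integrable independent-increments martingale is again a martingale. The only place requiring care is the bookkeeping of which geometric sojourn times generate ${\cal F}_{c_0}$ versus which enter the increment $D$, so that both the independence used above and the additivity of variances across increments are legitimately applied; once that is pinned down, the remaining steps are routine.
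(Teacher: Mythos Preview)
Your proposal is correct and follows essentially the same approach as the paper: both use Equation~(\ref{taudiff}) for the martingale property of $M^n_c$, and both establish the second martingale by expanding the square of the increment and exploiting independence to identify $E[D^2\mid{\cal F}_{c_0}]=A^n_{c_1}-A^n_{c_0}$. Your version is slightly more explicit about integrability and about which sojourn times generate ${\cal F}_{c_0}$, but the argument is the same.
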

\begin{proof} The fact $E[M^n_{c_1}|{\cal F}_{c_0}]=M^n_{c_0}$ follows directory from (\ref{taudiff}), showing that $M^n_c$ is a mean zero martingale.

Using the mean zero and independent increments properties again, we find
$$E[(M^n_{c_1}-M^n_{c_0})^2|{\cal F}_{c_0}] = \hbox{Var}(M^n_{c_1}-M^n_{c_0}|{\cal F}_{c_0})=\hbox{Var}(M^n_{c_1}-M^n_{c_0})=A^n_{c_1}-A^n_{c_0}.$$
Also,
\begin{align*}
E[(M^n_{c_1}-M^n_{c_0})^2|{\cal F}_{c_0}] &=E[(M^n_{c_1})^2|{\cal F}_{c_0}]- 2M^n_{c_0}E[M^n_{c_1}|{\cal F}_{c_0}]+(M^n_{c_0})^2\\
&= E[(M^n_{c_1})^2|{\cal F}_{c_0}]- (M^n_{c_0})^2.
\end{align*}
Combining,
$$E[(M^n_{c_1})^2|{\cal F}_{c_0}]- (M^n_{c_0})^2=A^n_{c_1}-A^n_{c_0}, \quad\hbox{and}\quad  E[(M^n_{c_1})^2-A^n_{c_1}|{\cal F}_{c_0}]=(M^n_{c_0})^2-A^n_{c_0},$$
showing that 
$$(M^n_c)^2-A^n_c$$
is also a martingale. 
\end{proof}
We may therefore state the following theorem.
\begin{theorem}\label{thm:fclt}
\label{th:clm}
$M^n_c$ converges in distribution as $n\to\infty$ to a continuous independent increments Gaussian process with mean zero and variance function $\sigma^2_I(c)$.
\end{theorem}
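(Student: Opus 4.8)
The plan is to recognize Theorem~\ref{thm:fclt} as an instance of a martingale functional central limit theorem (see \cite{ethier2009markov}, Theorem~7.1.4), with the cumulative--fraction variable $c$ playing the role of time. The preceding theorem already supplies most of the hypotheses: $M^n_c$ is a mean--zero c\`adl\`ag martingale for the filtration ${\cal F}_c$, its (deterministic, hence predictable) quadratic variation is $\langle M^n\rangle_c = A^n_c = n\,\mathrm{Var}(\bar\tau_c)$, and Theorem~\ref{thm:limvar} shows that $A^n_c - A^n_{c_0} \to \sigma^2_I(c) := \tfrac{1}{R_0}\ln\!\big(\tfrac{1-c_0}{1-c}\big) + \tfrac{1}{R_0^2}\tfrac{c-c_0}{(1-c)(1-c_0)}$, a continuous, nondecreasing, \emph{deterministic} function. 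The only missing ingredient is the asymptotic negligibility of the jumps of $M^n$.

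First I would fix the left endpoint $c_0 > 0$ and a right endpoint $c^\ast < c_\infty$, where $c_\infty \in (0,1)$ is the deterministic final--size fraction (the positive root of $c + R_0^{-1}\ln(1-c) = 0$), and work --- as throughout the paper --- conditionally on the outbreak having reached $\lceil n c_0\rceil$ cases; on that event, whose probability is bounded below (converging to the major--outbreak probability $1 - 1/R_0$), every hitting time $\tau_{n_C}$ with $n_C \le \lceil n c^\ast\rceil$ is finite with conditional probability $\to 1$, so that $M^n$ is well defined on $[c_0, c^\ast]$. I would then view $M^n$ as a random element of the Skorokhod space $D([c_0, c^\ast])$ via its right--continuous step extension, centered so that $M^n_{c_0} = 0$ (this is legitimate: $M^n_c - M^n_{c_0} = -\sqrt n\big((\bar\tau_c - \bar\tau_{c_0}) - E(\bar\tau_c - \bar\tau_{c_0})\big)$ is again an ${\cal F}_c$-martingale with the same increments, since $\bar\tau_c - \bar\tau_{c_0} = \tfrac1n\sum_{m}\sigma_m$ is a sum of independent terms). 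The jumps of this process are of the form $-\tfrac{1}{\sqrt n}(\sigma_m - E\sigma_m)$ with $\sigma_m \sim Geom_1(p(m))$, and since $p(m) \ge p^\ast := R_0(1-c^\ast)/(R_0(1-c^\ast)+1) > 0$ uniformly in $n$ for $m \le n c^\ast$, the $\sigma_m$ have geometric tails with constants depending only on $p^\ast$.

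With this uniform tail control, the Lindeberg--type jump condition is routine: for every $\varepsilon > 0$,
\[
\sum_{m=\lceil nc_0\rceil}^{\lceil nc^\ast\rceil - 1} E\!\left[\tfrac{1}{n}(\sigma_m - E\sigma_m)^2\,\mathbf 1_{\{|\sigma_m - E\sigma_m| > \varepsilon\sqrt n\}}\right] \longrightarrow 0,
\]
because each expectation decays exponentially in $\sqrt n$ uniformly in $m$, so that even the $O(n)$ terms sum to zero; equivalently $\max_{m \le nc^\ast}\sigma_m = O_P(\log n)$, whence $\sup_{c \le c^\ast}|\Delta M^n_c| = O_P(\log n / \sqrt n) \to 0$ and $[M^n]_c - \langle M^n\rangle_c \to 0$ in probability. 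Feeding (i) $\langle M^n\rangle_c = A^n_c \to \sigma^2_I(c)$ deterministic and continuous, and (ii) this jump negligibility, into the martingale FCLT gives $M^n \Rightarrow M$ in $D([c_0, c^\ast])$, where $M$ is a continuous Gaussian martingale with $\langle M\rangle_c = \sigma^2_I(c)$; because the limiting quadratic variation is deterministic, $M$ automatically has independent, mean--zero Gaussian increments with $\mathrm{Var}(M_c - M_{c'}) = \sigma^2_I(c) - \sigma^2_I(c')$ (indeed $M_c = W(\sigma^2_I(c))$ for a standard Brownian motion $W$). Letting $c^\ast \uparrow c_\infty$ extends the statement to the full index interval $[c_0, c_\infty)$, on which $\sigma^2_I$ remains finite.

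I expect the \textbf{main obstacle} to be not the central limit theorem itself but the bookkeeping required to apply it: turning $c$ into a bona fide time parameter (the relevant filtration is the discrete chain's filtration stopped at the hitting times $\tau_{n_C}$), coping with the \emph{random} range of $c$ by conditioning on a major outbreak and truncating at some $c^\ast < c_\infty$, and --- most crucially --- establishing the tail bound on the geometric holding times $\sigma_m$ uniformly in $n$, which is what both delivers the Lindeberg condition and forces the index set to stay away from $c = 1$ (where $p(n_C) \to 0$ and $\mathrm{Var}(\sigma_m)$ blows up).
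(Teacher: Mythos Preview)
Your proposal is correct and follows essentially the same approach as the paper: both invoke the martingale functional central limit theorem of \cite{ethier2009markov}, Section~7.1, using that $M^n_c$ is a mean-zero martingale with deterministic compensator $A^n_c \to \sigma^2_I(c)$. Your treatment is in fact more complete than the paper's brief sketch, since you explicitly verify the Lindeberg-type jump-negligibility condition (via the uniform geometric tails of the $\sigma_m$ on $[c_0,c^\ast]$) and attend to the domain restriction $c^\ast < c_\infty$ and the conditioning on a major outbreak, all of which the paper leaves implicit.
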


\begin{proof}
The martingale central limit theorem has three ingredients:
\begin{enumerate}
    \item A sequence of martingales, here the sequence of stochastic processes $M^n_c$.
    \item A sequence of positive processes $A^n_c$ that compensate for $(M^n_c)^2$  so that $(M^n_c)^2-A^n_c$ is a martingale. 
    \item $A^n_c$ converges to a deterministic function continuous in $c$. Here the $A^n_c$ are themselves deterministic and converge to $\sigma^2_I(c)$ as $n\to\infty$, where
$$\sigma^2_I(c)= -\frac{1}{R_0} \ln(1-c) + \frac{1}{R_0^2}\frac{c}{1-c}.$$
We have set $c_0 = 0$ in the asymptotic expansions derived in Theorem \ref{thm:limvar} to obtain an expression in terms of $c$ only.
\end{enumerate}
Since 1, 2, and 3 hold, then the sequence of martingales converges to a mean-zero independent increments Gaussian process (see \cite{ethier2009markov} Section 7.1).
\end{proof}

\begin{remark}
\label{rm:m_I}
As a consequence of Theorem \ref{th:clm}, the mean of the scaled infected satisfies
$$E\bar{I}_c\simeq m_I(c) = c + \frac{1}{R_0} \ln (1-c)$$
and the variance 
$$n\hbox{Var}(\bar{I}_c)\simeq \sigma^2_I(c),$$
with equality in the limit as $n \to \infty$.
\end{remark}

\begin{remark} \label{rm:lln} Because $\hbox{Var}(\bar{I}_c)\to 0$ as $n\to\infty$, the convergence of expectations in Theorem \ref{limE} can, by  Theorem \ref{thm:fclt}, be replaced by convergence in mean square.
\end{remark}

\begin{remark} We can recover the number of recovered at the hitting time $\tau_{n_C}$ by noting that 
\[
R_{\tau_{n_C}}- R_{\tau_{n_0}} = (\tau_{n_C} - \tau_{n_0})-(n_C-n_0) = - (I_{\tau_{n_C}} - I_{\tau_{n_0}}) + (n_C - n_0)
\]
and thus
\[
\frac{1}{n}(R_{\tau_{n_C}}-R_{\tau_{n_0}}) =
- \frac{1}{n} (I_{\tau_{n_C}} - I_{\tau_{n_0}}) + (c - c_0) = -(\bar I_c - \bar I_{c_0}) + (c - c_0).
\]
\end{remark} 

\begin{corollary} The scaled limit of $\bar R_c= R_{\tau_{n_C}}/n$ converges to an independent increments Gaussian process. The mean of the increment from $c_0$ to $c$ is 
$$m_R(c)-m_R(c_0) =\frac{1}{R_0} \ln\left(\frac{1-c_0}{1-c}\right).$$
The variance satisfies $\sigma^2_R(c)=\sigma^2_I(c)$. The limiting processes for the scaled infective and recovered individuals have correlation $-1$.

\begin{remark} \label{rm:BM} For large $n$ and $c_0>0$, the distribution of increment $\bar I_c - \bar I_{c_0}$ can be approximated using a deterministic time change of {\bf standard Brownian motion}, $B$.
$$\bar I_c - \bar I_{c_0}\approx m_I(c)-m_I(c_0) + \frac{1}{\sqrt{n}}\big(B(\sigma_I(c)) - B(\sigma_I(c_0))\big).$$
This allow for easy and very accurate simulation of the independent increments Gaussian process.
\end{remark}
\end{corollary}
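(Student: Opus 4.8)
The plan is to obtain the entire statement as an affine consequence of the functional central limit theorem already established for $\bar I_c$ (Theorem \ref{thm:fclt}), via the deterministic identity recorded in the remark immediately preceding the corollary,
\[
\bar R_c - \bar R_{c_0} = -(\bar I_c - \bar I_{c_0}) + (c - c_0).
\]
First I would note that subtracting the expectation of this identity kills the deterministic $(c-c_0)$ term, so that the centered, $\sqrt n$-rescaled recovered process $\sqrt n(\bar R_c - E\bar R_c)$ is \emph{exactly} $-M^n_c$ for every $n$ and every $c$. Since the map in question is linear-plus-shift, hence continuous on path space, the continuous mapping theorem applied to Theorem \ref{thm:fclt} yields convergence in distribution of $\sqrt n(\bar R_c - E\bar R_c)$ to $-W$, where $W$ is the limiting continuous mean-zero independent-increments Gaussian process of Theorem \ref{thm:fclt}. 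The process $-W$ has the same law up to sign, so it is again a continuous mean-zero independent-increments Gaussian process, and its variance function is unchanged, giving $\sigma^2_R(c)=\sigma^2_I(c)$ with no further computation.

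Next I would read off the mean of the increment by taking expectations in the identity and using $E\bar I_c \to m_I(c)=c+\frac1{R_0}\ln(1-c)$ from Remark \ref{rm:m_I}: the $(c-c_0)$ contributions cancel, leaving $m_R(c)-m_R(c_0) = -\big(m_I(c)-m_I(c_0)\big) + (c-c_0) = \frac1{R_0}\ln\frac{1-c_0}{1-c}$. For the correlation claim I would observe that the pair $\big(\sqrt n(\bar I_\cdot - E\bar I_\cdot),\ \sqrt n(\bar R_\cdot - E\bar R_\cdot)\big) = (M^n_\cdot,\,-M^n_\cdot)$ is itself a continuous image of the single process $M^n_\cdot$, so it converges jointly to $(W,-W)$; perfect anticorrelation of the two limiting fluctuation processes is then immediate. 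Finally, the nested Remark \ref{rm:BM} follows by representing the Gaussian limit of $\bar I$ — which has independent increments and a deterministic, continuous, nondecreasing variance function — as a time-changed standard Brownian motion, a standard construction for independent-increments Gaussian processes.

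The one point that I expect to need a little care, and the only genuine obstacle, is bookkeeping at the left endpoint: the hitting time $\tau_{n_C}$ and the geometric decomposition \eqref{sumsigma} are set up for $n_C\ge 1$ with $C_0=0$, $C_1=1$, so the recovered process is most naturally indexed from a small $c_0>0$, exactly as in the statement of Remark \ref{rm:BM}. One must therefore either keep $c_0>0$ throughout or verify that replacing it by $0$ introduces only an $O(1/n)$ discrepancy, which vanishes in the limit and leaves the asymptotic mean and variance unaffected. Beyond this, no new probabilistic estimates are required — every assertion reduces to an affine transformation of the already-proved statements for $\bar I_c$.
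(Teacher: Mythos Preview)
Your proposal is correct and follows precisely the route the paper intends: the paper gives no explicit proof of this corollary, leaving it as an immediate consequence of the affine identity $\bar R_c - \bar R_{c_0} = -(\bar I_c - \bar I_{c_0}) + (c-c_0)$ recorded in the preceding remark together with Theorem~\ref{thm:fclt}. Your write-up simply spells out those steps (continuous mapping, reading off mean and variance, joint convergence to $(W,-W)$), which is exactly the argument the paper expects the reader to supply.
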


\subsection{Functional central limit theorem for the macroscopic incidence}
\label{ICLT}

We now turn to the macroscopic incidence scaled to the population size $n$, defined as
$$\frac{\mathcal I}{n} = {\mathcal I}_n = (\beta\, n) \bar{I}_c (1-c), \qquad {\mathcal I} = \beta\, n_I\, n_S,$$
where $\mathcal I$ was introduced at the end of Section \ref{sec:stoc_SIR}. Note that as $n \to \infty$, the population contact rate $\beta_P = \beta n$ remains constant for fixed $R_0 = (\beta n )/ \gamma = \beta_P/\gamma$. A central limit theorem similar to the one established in the previous section applies to ${\mathcal I}_n$. The mean scaled macroscopic incidence is obtained from the scaled number of infections 
$$m_{\mathcal I}(c) =(\beta\, n) m_I(c)(1-c),$$ 
and so is its variance, as stated below.
\begin{corollary} The scaled limit of ${\mathcal I}_n$ converges to an independent increments Gaussian process, of mean
\begin{align}
\label{eq:ICC2}
G(c,0) = G(c) &= (\beta\, n) \left(c + \frac{1}{R_0} \ln(1-c)\right) (1-c)\\ & = \beta_P \left(c + \frac{1}{R_0} \ln(1-c)\right) (1-c) \nonumber
\end{align}
and variance $\displaystyle \frac{1}{n} \sigma_{\mathcal I}^2(c)$, where
\begin{equation}
    \label{eq:var_i_macro}
\sigma_{\mathcal I}^2(c)=(\beta\, n)^2 \sigma_I^2(c) (1-c)^2= \beta_P^2 \left( -\frac{1}{R_0} \ln(1-c) + \frac{1}{R_0^2}\frac{c}{1-c}\right) (1-c)^2.
\end{equation}
\end{corollary}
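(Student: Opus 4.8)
The plan is to deduce this corollary from Theorem~\ref{th:clm} (equivalently Remark~\ref{rm:m_I}) by a continuous-mapping argument, since $\mathcal I_n$ is obtained from the already-analyzed process $\bar I_c$ by a deterministic, $n$-independent rescaling. Writing
$$\mathcal I_n(c) = (\beta n)\,\bar I_c\,(1-c) = \beta_P\,(1-c)\,\bar I_c ,$$
the factor $\beta_P = \beta n$ is held fixed as $n\to\infty$ while $1-c = (n-n_C)/n = n_S/n$ is a deterministic function of the index $c=n_C/n$. First I would record the purely algebraic identities $m_{\mathcal I}(c) = \beta_P(1-c)\,m_I(c) = G(c)$ and $\sigma_{\mathcal I}^2(c) = \beta_P^2(1-c)^2\,\sigma_I^2(c)$, which are exactly \eqref{eq:ICC2} and \eqref{eq:var_i_macro}: the target mean and variance are the $\bar I_c$-mean and variance multiplied by the deterministic weights $\beta_P(1-c)$ and $\beta_P^2(1-c)^2$.

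Next I would transfer the functional CLT. Set $\tilde M^n_c = \sqrt n\big(\mathcal I_n(c) - G(c)\big)$ and decompose
$$\tilde M^n_c = \beta_P(1-c)\,\sqrt n\,(\bar I_c - E\bar I_c) + \beta_P(1-c)\,\sqrt n\,(E\bar I_c - m_I(c)) = \beta_P(1-c)\,M^n_c + r^n_c ,$$
where the first term is $\beta_P(1-c)$ times the martingale $M^n_c$ of Theorem~\ref{th:clm}. For $r^n_c$ I would revisit the proof of Theorem~\ref{limE}: there $E\bar\tau_c$, and hence $E\bar I_c = 2c - E\bar\tau_c$, differs from its limit only through the $O(1/n)$ error of a Riemann sum of the smooth monotone integrand $1/(1-q)$ on $[0,c]$, so $E\bar I_c - m_I(c) = O(1/n)$ uniformly on any $[0,c_{\max}]$ with $c_{\max}<1$; thus $r^n_c = O(1/\sqrt n)\to 0$ uniformly. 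Since $M^n_c$ converges in distribution to the continuous, mean-zero, independent-increments Gaussian process $M_c$ with variance $\sigma_I^2(c)$ (Theorem~\ref{th:clm}) and $c\mapsto\beta_P(1-c)$ is continuous and bounded on $[0,c_{\max}]$, the continuous mapping theorem gives $\beta_P(1-c)M^n_c \Rightarrow \beta_P(1-c)M_c$, and adding the uniformly vanishing $r^n_c$ yields $\tilde M^n_c \Rightarrow \beta_P(1-c)M_c$. Reading off the marginals, $\mathcal I_n(c)$ is asymptotically Gaussian with mean $G(c)$ and variance $\tfrac1n\sigma_{\mathcal I}^2(c)$, and the fluctuation process is, in the limit, the deterministic rescaling $\beta_P(1-c)$ of the independent-increments process $M_c$.

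The one point requiring care --- and which I expect to be the only real obstacle --- is the precise sense in which $\mathcal I_n$ converges to an \emph{independent-increments} Gaussian process. Strictly speaking, increments of $c\mapsto\beta_P(1-c)M_c$ over disjoint $c$-intervals are linear combinations of $M_c$ at the endpoints, so, unlike $\bar I_c$ or $\bar R_c$, they need not be exactly uncorrelated once the weight $(1-c)$ varies; the clean statement is that $\mathcal I_n$ is a deterministic, $n$-independent, $c$-dependent multiple of the independent-increments Gaussian process governing $\bar I_c = \mathcal I_n/\big(\beta_P(1-c)\big)$, from which its jointly Gaussian finite-dimensional distributions, and in particular the marginal mean $G(c)$ and variance $\tfrac1n\sigma_{\mathcal I}^2(c)$, follow at once. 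Everything else --- tightness, convergence of finite-dimensional distributions, and identification of the limit --- is immediate from Theorem~\ref{th:clm}, since multiplication by $\beta_P(1-c)$ introduces no new stochastic structure.
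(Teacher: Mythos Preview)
Your argument is correct and is essentially the paper's own approach: the paper gives no separate proof of this corollary, treating it as immediate from Theorem~\ref{th:clm} via the deterministic relation $\mathcal I_n(c)=\beta_P(1-c)\bar I_c$, which is exactly your continuous-mapping reduction. Your handling of the remainder $r^n_c$ and the tightness/finite-dimensional convergence is more explicit than anything the paper writes down.

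Your final paragraph is worth highlighting, because you are being more careful than the paper. You are right that a $c$-dependent multiplicative weight does not preserve independent increments: if $M_c$ has independent increments, then $c\mapsto\beta_P(1-c)M_c$ generally does not, since $(1-c_1)M_{c_1}-(1-c_0)M_{c_0}$ mixes $M_{c_0}$ into the increment. The paper's phrasing ``converges to an independent increments Gaussian process'' should be read loosely, in the sense you articulate: $\mathcal I_n$ is a deterministic, $n$-independent rescaling of the genuinely independent-increments process $\bar I_c$, so its finite-dimensional distributions are jointly Gaussian with the stated marginal mean $G(c)$ and variance $\tfrac{1}{n}\sigma_{\mathcal I}^2(c)$, and all dependence structure is inherited from $\bar I_c$. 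Your reformulation is the precise statement; the paper does not address this subtlety.
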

The expression for $G$ in \eqref{eq:ICC2} is the same as in Equation \eqref{eq:ICC} with $c_0 / n$ set to $0$, showing agreement between the deterministic result and the mean of the stochastic model in the limit of large populations. This is the reason why we called ${\mathcal I} = \beta n_I n_S$ the macroscopic incidence. The above calculations have immediate consequences for the distribution of two quantities relevant to public health: the fraction of the population infected at peak incidence, and the final size of the outbreak. We state these results in the next section.

\subsection{Final population size and peak incidence}
\label{app:final_size}

Important properties of a disease outbreak are given at critical values $c_*$ of the fraction of cumulative cases $c = n_C / n$. Two particularly relevant examples of $c_*$ are
\begin{enumerate}
\item  $c_\wedge$, the \textit{fraction of the population that will have been infected at expected peak incidence}, i.e. when $ G'(c_\wedge)=0$, and
\item $c_\infty$, the \textit{expected final size of the outbreak}, i.e. the mean fraction of the population that will have been infected by the time the outbreak ends.
\end{enumerate}

\medskip
The first may be obtained implicitly by solving $G'(c_\wedge) = 0$ for $c_\wedge$.
\begin{align*}
0=G'(c_\wedge) &= (\beta\, n)((m_I'(c_\wedge)(1-c_\wedge) - m_I(c_\wedge))\\
&= (\beta\, n)\left(\left(1-\frac{1}{R_0}\frac{1}{1-c_\wedge}\right)(1-c_\wedge)-\left(c_\wedge +\frac{1}{R_0}\ln(1-c_\wedge)\right)\right) \\
&= (\beta\, n)\left(\left((1-c_\wedge)-\frac{1}{R_0}\right)-\left(c_\wedge + \frac{1}{R_0}\ln(1-c_\wedge)\right)\right) \\
&= (\beta\, n)\left(1-2c_\wedge-\frac{1}{R_0}(1 + \ln(1-c_\wedge)\right) \\
\Longrightarrow c_\wedge &= \frac{-1}{2R_0}(1+\ln(1-c_\wedge)) +\frac{1}{2}.
\end{align*}
The value of $c_\wedge$ may then be found numerically for specific values of $R_0$. In addition, the expression for $\sigma_{\mathcal I}(c_\wedge)$ may be applied to estimate the distribution of the scaled macroscopic incidence when $c = c_\wedge$. The bottom row of Figure \ref{fig:total-peak} shows $c_\wedge$ (left) and $\sigma_\wedge = \sigma_{\mathcal I} / (\beta n)$ (right) as functions of $R_0$, whereas Table \ref{table:two} displays their numerical values for typical values of $R_0$.

The second requires the variant of the delta method applied to hitting times (see \cite{ethier2009markov}, Section 11.4). This approach uses propagation of error to give a valuable extension of the central limit theorem. We state the result in the form of a theorem below.

\begin{theorem}
Define
$$\hat{c}_{\infty} = \inf\{c>0; \bar{I}_c = 0\}.$$
Then, $\hat c_\infty$ is approximately normally distributed, with mean $c_\infty$ such that $m_I(c_\infty)=0$ and standard deviation 
$$\sigma(\hat c_\infty) \approx \frac{1}{|m'(c_\infty)|} \frac{\sigma_I(c_\infty)}{\sqrt{n}} = \frac{\sigma_\infty}{\sqrt n}. $$
\end{theorem}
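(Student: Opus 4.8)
The plan is to obtain the asymptotic normality of $\hat c_\infty$ by transferring the Gaussian fluctuation result for $\bar I_c$ (Theorem \ref{thm:fclt} and Remark \ref{rm:m_I}) through the map that sends the path $c \mapsto \bar I_c$ to its first zero. First I would recall that $\bar I_c = m_I(c) + n^{-1/2} M^n_c / \sqrt n$ where, by Theorem \ref{thm:fclt}, $M^n_c$ converges to a continuous mean-zero Gaussian process with variance $\sigma_I^2(c)$; in particular $\bar I_c \to m_I(c)$ uniformly on compact subsets of $c < 1$, and $m_I$ is $C^1$ there with $m_I(c_\infty) = 0$. Since $m_I'(c) = 1 - \tfrac{1}{R_0(1-c)}$ and $c_\infty$ is the nonzero root of $m_I$, one checks (as in the deterministic SIR analysis of \cite{Lega2020}) that $m_I'(c_\infty) \ne 0$ — indeed $m_I'(c_\infty) < 0$, since $m_I$ is positive on $(0,c_\infty)$ and must cross zero transversally there. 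This non-degeneracy of the derivative at the crossing is exactly the hypothesis needed for the delta method for hitting times.

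Next I would set up the linearization. Write $\hat c_\infty$ for the first $c>0$ with $\bar I_c = 0$. Expanding $0 = \bar I_{\hat c_\infty} = m_I(\hat c_\infty) + \tfrac{1}{\sqrt n} M^n_{\hat c_\infty}$ around $c_\infty$ gives
\begin{displaymath}
0 = m_I'(c_\infty)\,(\hat c_\infty - c_\infty) + \tfrac{1}{\sqrt n} M^n_{\hat c_\infty} + o\big(|\hat c_\infty - c_\infty|\big),
\end{displaymath}
so that
\begin{displaymath}
\sqrt n\,(\hat c_\infty - c_\infty) = -\frac{1}{m_I'(c_\infty)}\, M^n_{\hat c_\infty} + \sqrt n\cdot o\big(|\hat c_\infty - c_\infty|\big).
\end{displaymath}
Because $\hat c_\infty \to c_\infty$ in probability (a consequence of uniform convergence of $\bar I_c$ to $m_I$ plus the transversal crossing) and $M^n_c$ converges in distribution to a process with continuous sample paths, a continuous-mapping / Slutsky argument lets me replace $M^n_{\hat c_\infty}$ by $M^n_{c_\infty}$ up to $o_P(1)$, and the remainder term is $o_P(1)$ as well. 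Hence $\sqrt n\,(\hat c_\infty - c_\infty)$ converges in distribution to $-M_{c_\infty}/m_I'(c_\infty)$, a mean-zero normal with variance $\sigma_I^2(c_\infty)/m_I'(c_\infty)^2$, which gives the claimed mean $c_\infty$ and standard deviation $\sigma_I(c_\infty)/(|m_I'(c_\infty)|\sqrt n)$. I would cite \cite{ethier2009markov}, Section 11.4, for the precise statement of the delta method for hitting times, which packages exactly this argument.

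The main obstacle is making the passage from $M^n_{\hat c_\infty}$ to $M^n_{c_\infty}$ rigorous: this requires a modulus-of-continuity (tightness/stochastic equicontinuity) estimate on the martingale sequence $M^n_c$ near $c_\infty$, together with consistency $\hat c_\infty \xrightarrow{P} c_\infty$. The consistency itself needs a little care because $\bar I_c$ could in principle have fluctuated to zero slightly before $c_\infty$; one rules this out using that $m_I(c) \ge \delta > 0$ on any $[c', c_\infty - \epsilon]$ bounded away from $0$ and $c_\infty$, so that $\bar I_c > 0$ there with probability tending to one. A secondary, more delicate point is the behavior of the variance $\sigma_I^2(c)$ as $c \uparrow 1$: $\sigma_I^2(c)$ blows up like $1/(1-c)$, so one wants $c_\infty < 1$ strictly (true for finite $R_0$, where $c_\infty$ solves $c_\infty + R_0^{-1}\ln(1-c_\infty) = 0$) in order for the limiting variance at $c_\infty$ to be finite — I would note this explicitly. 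Everything else is the routine delta-method bookkeeping, and for simulation purposes the Brownian approximation of Remark \ref{rm:BM} makes the resulting normal approximation immediately usable.
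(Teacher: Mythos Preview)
Your proposal is correct and follows essentially the same delta-method-for-hitting-times argument as the paper (which also invokes \cite{ethier2009markov}, Section 11.4), only with more explicit attention to the consistency $\hat c_\infty \to c_\infty$ and the stochastic equicontinuity needed to pass from $M^n_{\hat c_\infty}$ to $M^n_{c_\infty}$, which the paper leaves implicit. One small slip: you write $\bar I_c = m_I(c) + n^{-1/2} M^n_c/\sqrt n$, but since $M^n_c = \sqrt n(\bar I_c - E\bar I_c)$ the correct relation is $\bar I_c \approx m_I(c) + M^n_c/\sqrt n$; the rest of your argument already uses this correct scaling.
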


\begin{proof}
Because
$\bar{I}_c \to m_I(c)$ in $L^2$ as $n\to\infty$ and $m_I$ is continuous, we have $\hat{c}_{\infty}\to c_\infty$.
By the central limit theorem (Theorem \ref{th:clm} of the previous section), 
$$\sqrt{n}(\bar{I}_{\hat{c}_{\infty}}-m_I(\hat{c}_{\infty})) \to W,$$
where $W\sim N(0,\sigma^2_I(c_\infty))$, a normal random variable with mean 0 and variance $\sigma^2_I(c_\infty)$. Next, recall that
$m_I(c_\infty)=\bar{I}_{\hat{c}_{\infty}}=0,$
thus
$$\sqrt{n}(\bar{I}_{\hat{c}_{\infty}}-m_I(\hat{c}_{\infty}))= \sqrt{n}(m_I(c_\infty)-m_I(\hat{c}_{\infty}))
\simeq \sqrt{n}\, m_I'(c_\infty)(c_\infty-\hat{c}_{\infty}).$$
Consequently, $\hat c_\infty$ is approximately normally distributed, with mean $c_\infty$ and standard deviation 
\[
\sigma(\hat c_\infty) \simeq \frac{1}{|m'(c_\infty)|} \frac{\sigma_I(c_\infty)}{\sqrt{n}} = \frac{\sigma_\infty}{\sqrt n}.
\]
\end{proof}
Thus, the standard deviation is multiplied by a propagation of error which is inversely proportional to the slope of $m_I(c_\infty)$. The error is expanded when the slope is shallow and contracted when the slope is steep.
An expression for $c_\infty$ may be found implicitly as a function of $R_0$.
$$0=m_I(c_\infty)=c_\infty +\frac{1}{R_0}\ln (1-c_\infty), \qquad \text{i.e.} \qquad c_\infty = -\frac{1}{R_0}\ln(1-c_\infty).$$
Substituting into the variance formula, we have
$$\sigma_I^2(c_\infty) = -\frac{1}{R_0} \ln(1-c_\infty) + \frac{1}{R_0^2}\frac{c_\infty}{1-c_\infty}= c_\infty + \frac{1}{R_0^2}\frac{c_\infty}{1-c_\infty}.$$
In addition, the derivative 
$$m'_I(c_\infty)=1 -\frac{1}{R_0}\frac{1}{1-c_\infty}$$
leads to
\begin{eqnarray*}
\frac{\sigma_I^2(c_\infty)}{m'_I(c_\infty)^2} &= &\frac{c_\infty + \frac{1}{R_0^2}\frac{c_\infty}{1-c_\infty}}{\left(1 -\frac{1}{R_0}\frac{1}{1-c_\infty} \right)^2 }=\frac{R_0^2c_\infty(1-c_\infty)^2+c_\infty(1-c_\infty)}{(R_0(1-c_\infty)-1)^2}\\
&= &\frac{c_\infty(1-c_\infty)(R_0^2(1-c_\infty)+ 1)}{(R_0(1-c_\infty)-1)^2}.
\end{eqnarray*}
The square root of this expression gives $\sigma_\infty$, from which one can calculate $\sigma(\hat c_\infty)$ for specific values of $n$. The top row of Figure \ref{fig:total-peak} shows $c_\infty$ (left) and $\sigma_\infty$ (right) as functions of $R_0$. Selected numerical values are displayed in Table \ref{table:two}.

\begin{figure}
\begin{center}
\includegraphics[width=\linewidth]{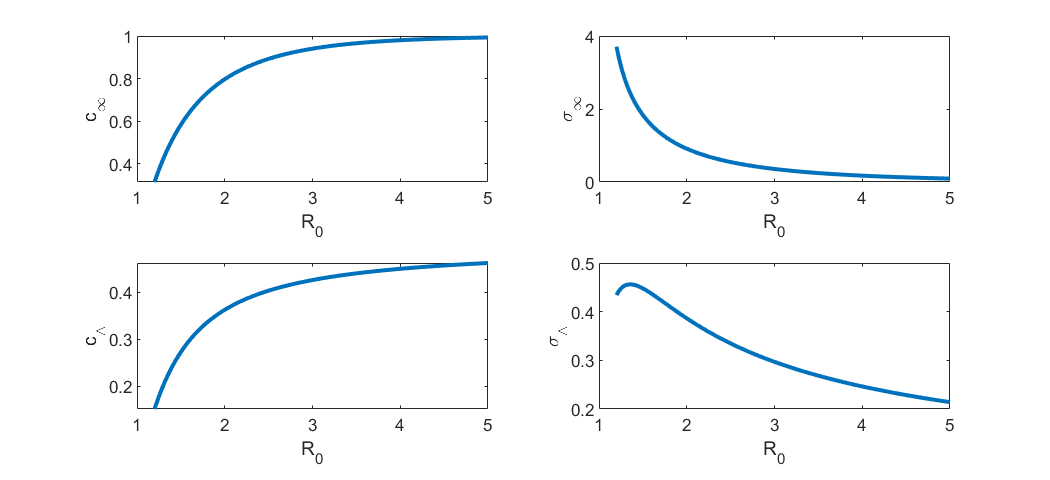}
\caption{\label{fig:total-peak} Functional dependence of select outbreak characteristics on the basic reproduction number $R_0$. {\bf Top row, left:} Mean $c_\infty$ of the fraction of population that eventually become cases, $\hat c_\infty$. {\bf Top row, right:} Behavior of $\sigma_\infty$, where $\hat c_\infty$ has standard deviation $\sigma_\infty / \sqrt{n}$. {\bf Bottom row, left:} Fraction of cumulative cases at expected peak infection $c_\wedge$. {\bf Bottom row, right:} Behavior of $\sigma_\wedge$, where ${\cal I}_n(c_\wedge)$ has standard deviation $(\beta\, n) \sigma_\wedge/\sqrt{n}$.}
\end{center}
\end{figure}

\begin{remark}
The central limit theorem for $c_\infty$ is known (see \cite{ScaliaTomba1985,ScaliaTomba1990}) but the proof presented here is new.
\end{remark}

As the graphs associated to $c_\infty$ show, the course of the pandemic looks more and more deterministic as $R_0$ grows, with an increase in cases and reduction in the standard deviation $\sigma_\infty$. The value of $c_\wedge$ increases with  $R_0$ from 0.152 to 0.462 as $R_0$ increases from 1.2 to 5.0 while the standard deviation $\sigma_\wedge$ decreases for $R_0 > 1.5$. Notably, the ratio $c_\wedge/c_\infty$ is nearly stable between 0.45 and 0.49 over a large range of values for $R_0$, reflecting the universal properties of the shape of the ICC curve.

\begin{table}
\centering
\begin{tabular}{c|cc|cc||c}
$R_0$&$\mu_{\hat c_\infty}=c_\infty$&$\sigma_\infty$&$\mu_{\hat c_\wedge}=c_\wedge$&$\sigma_\wedge$ &$c_\wedge/c_\infty$\\ \hline
1.2&0.314&3.708&0.152&0.434&0.485 \\
1.5&0.583&1.835&0.273&0.448&0.468\\
2.0&0.797&0.913&0.363&0.386&0.455\\
2.5&0.893&0.547&0.403&0.335&0.452\\
3.0&0.941&0.357&0.426& 0.297&0.453\\
3.5&0.966&0.245&0.440&0.268&0.455\\
4.0&0.980&0.174&0.450&0.246&0.459\\
4.5&0.988&0.126&0.457&0.229&0.462\\
5.0&0.993&0.094&0.462&0.214&0.465\\ \hline
\end{tabular}
\medskip
\caption{
\label{table:two}\noindent Values for the means of the fraction of population that eventually become cases $\mu_{\hat c_\infty}=c_\infty$, and the fraction of cases at peak infection $\mu_{\hat c_\wedge}=c_\wedge$. For a population of size $n$, the standard deviation for $\hat c_\infty$ is $\sigma_\infty/\sqrt{n}$. The standard deviation of ${\cal I} / n$ at $c = c_\wedge$ is $(\beta n)\, \sigma_\wedge/\sqrt{n}$. 
The final column gives the ratio of means and shows the universality of the ICC curve over a range of values for $R_0$.} 
\end{table}

\subsection{The stochastic ICC curve}

Section \ref{app:CLT} focused on the relationship between the fraction of infective individuals and the fraction of cumulative cases. This casting of the question has been shown to remove time from the analysis and with it the parameter $\beta$, the {\bf time} rate of infections.

We know bring time back into the picture by examining discrete incidence as a function of cases. Discrete, or reported, incidence ${\mathcal I}_\Delta$ is the number of new cases that occur over a given period of time $\Delta$. We shall see how the variance for ${\mathcal I}_\Delta$ depends on $\Delta$ in a nontrivial manner. To understand this dependence, we return to Equation \eqref{cincrement} and continue our analysis by computing the variance of the increment of the number of cases from time $t$ to time $t+\Delta$.
\begin{align*}
&\hbox{Var}(N_C(t+\Delta) - N_C(t)|N_C(t)=n_C) \\
&= \hbox{Var}\left(\tilde Y_1\left(\int_t^{t+\Delta}\beta (n-N_C(u))N_I(u) du\right)\Big|N_C(t)=n_C\right).
\end{align*}
where $\tilde Y_1$ is a rate-$1$ Poisson process.

To simplify notation, denote the conditional expectation   $E_{n_C} = E[\cdot|N_C(t) = n_C]$ and conditional variance  $\hbox{Var}_{n_C} = \hbox{Var}(\cdot|N_C(t) = n_C)$. Define the random variables  
$$\zeta = \int_t^{t+\Delta}\beta (n-N_C(u))N_I(u) du \quad\hbox{and}\quad \eta = \tilde Y_1(\zeta).$$
Then, $\eta \sim Pois(\zeta)$. Because the parameter in a Poisson random variable is both its mean and its variance, $E_{n_C}[\eta|\zeta] = \hbox{Var}_{n_C}(\eta|\zeta)= \zeta$. By the law of total variance,
\begin{equation}\label{eq:ltv}
\hbox{Var}_{n_C}(\eta) = E_{n_C}[\hbox{Var}_{n_C}(\eta|\zeta)] + \hbox{Var}_{n_C}( E_{n_C}[\eta|\zeta])= E_{n_C}[\zeta] + \hbox{Var}_{n_C}(\zeta).
\end{equation}
The first term of \eqref{eq:ltv} has order $\Delta$. Corollary \ref{cor:mac_ICC}  shows, that after dividing by $\Delta$, its limit as $\Delta\to 0$ is 
\begin{equation}\label{eq:var1}
   \beta (n-n_C)E[N_I(t)|N_C(t)=n_C].
\end{equation}
Expression \eqref{eq:var1} is the ICC curve. The second term is $O(\Delta^2)$. So, dividing by $\Delta^2$,
\begin{eqnarray}\label{eq:var2}
\frac{1}{\Delta^2}\hbox{Var}_{n_C}(\zeta) &=& \hbox{Var}_{n_C}\left(\frac{1}{\Delta}\int_t^{t+\Delta}\beta (n-N_C(u))N_I(u) du \right)\nonumber \\ 
&\to& \hbox{Var}_{n_C}(\beta (n-n_C) N_ I(t))= \beta^2(n-n_C)^2\hbox{Var}(N_I(t)|N_C(t)= n_C).
\end{eqnarray}
as $\Delta\to 0$. In the limit of large populations, expression \eqref{eq:var2} is the variance of the macroscopic incidence given by $\hbox{Var}({\mathcal I})=n\, \sigma_{\mathcal I}^2$, where $\sigma_{\mathcal I}^2$ is defined in Equation \eqref{eq:var_i_macro}.

Because the second term in the law of total variance is $O(\Delta^2)$, we will need to determine the second order term for $E_{n_C}[\zeta]$ to complete our analysis. To this end, we first rewrite the continuous time Markov SIR model with the number of cases $n_C$ and the number of infective individuals $n_I$  as state variables.

\begin{table}[ht]
\begin{center}
\begin{tabular}{c|c|c}
event&transition&rate \\ \hline
infection&$(n_C,n_I)\to(n_C+1,n_I+1)$&$\beta (n-n_C)n_I$ \\
recovery&$(n_C,n_I)\to(n_C,n_I-1)$&$\gamma n_I$ \\ \hline
\end{tabular} 
\end{center}
\medskip
\caption{\label{table:infrec} \smallskip \noindent Continuous-time SIR Markov process model with number of cases and number of infective as state variables.} 
\end{table}
The information in Table \ref{table:infrec} is also conveyed using the generator $G$ of the Markov process,
\begin{align*}
Gh(n_C,n_I)  = &\beta (n-n_C)n_I\big(h(n_C+1,n_I+1)-h(n_C,n_I)\big)\\
&\quad +\gamma n_I\big(h(n_C,n_I-1) -h(n_C,n_I)\big)    
\end{align*}

\begin{proposition} The $O(\Delta^2)$ term in the expansion of  $E_{n_C}[\zeta]$ is

\begin{equation}
\label{eq:var3}
\frac{1}{2}\beta^2(n-n_C)\bigg(\Big(n-n_C-1-\frac{n}{R_0}\Big)E_{n_C}[N_I(t)]
- E_{n_C}[N_I(t)^2]\bigg).
\end{equation}

\end{proposition}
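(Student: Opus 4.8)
The plan is to obtain $E_{n_C}[\zeta]$ as a Taylor series in $\Delta$ by writing it as the time-integral of a smooth function of the elapsed time and reading off the first two coefficients via the Kolmogorov backward equation. Recall that $\zeta=\int_t^{t+\Delta}\beta(n-N_C(u))N_I(u)\,du$; set $f(n_C,n_I)=\beta(n-n_C)n_I$, the infection rate, so that $\zeta=\int_0^{\Delta} f(N_C(t+s),N_I(t+s))\,ds$. Since the state space is finite, $f$ is bounded, Fubini applies, and by time-homogeneity
\[
E_{n_C}[\zeta]=\int_0^{\Delta} g(s)\,ds,\qquad g(s):=E_{n_C}\bigl[(P_s f)(N_C(t),N_I(t))\bigr]=E\bigl[\beta(n-N_C(t+s))N_I(t+s)\mid N_C(t)=n_C\bigr],
\]
where $P_s$ is the transition semigroup of the process in Table \ref{table:infrec}. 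On a finite state space $s\mapsto(P_s f)(n_C,n_I)$ is real-analytic, so $g(s)=g(0)+g'(0)\,s+O(s^2)$ with the remainder bounded uniformly on $[0,\Delta]$; integrating gives $E_{n_C}[\zeta]=g(0)\Delta+\tfrac12 g'(0)\Delta^2+O(\Delta^3)$, and the claimed quantity is precisely $\tfrac12 g'(0)$.

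First I would note that $g(0)=E_{n_C}[f(N_C(t),N_I(t))]=\beta(n-n_C)\,E_{n_C}[N_I(t)]$, which is exactly the $O(\Delta)$ term identified as the ICC curve in \eqref{eq:var1} (cf. Lemma \ref{lemma:cdot} and Corollary \ref{cor:mac_ICC}). For the $O(\Delta^2)$ coefficient I would use the backward equation $\tfrac{d}{ds}P_s f=P_s Gf$, which yields $g'(0)=E_{n_C}[(Gf)(N_C(t),N_I(t))]=E_{n_C}[(Gf)(n_C,N_I(t))]$ since $N_C(t)=n_C$ is fixed under the conditioning. It then remains to evaluate $Gf$ with the generator $G$ displayed above the Proposition. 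A direct substitution gives the two jump increments
\[
f(n_C+1,n_I+1)-f(n_C,n_I)=\beta\,(n-n_C-n_I-1),\qquad f(n_C,n_I-1)-f(n_C,n_I)=-\beta(n-n_C),
\]
so that, using $\gamma/\beta=n/R_0$,
\[
Gf(n_C,n_I)=\beta^2(n-n_C)n_I\,(n-n_C-n_I-1)-\beta\gamma(n-n_C)n_I=\beta^2(n-n_C)\Bigl(\bigl(n-n_C-1-\tfrac{n}{R_0}\bigr)n_I-n_I^2\Bigr).
\]
Taking $E_{n_C}$ over the remaining random variable $N_I(t)$ and multiplying by $\tfrac12$ reproduces exactly \eqref{eq:var3}.

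The heart of the argument is the explicit computation of $Gf$ above; beyond that, the only points needing justification are the interchange of expectation with the $s$-derivative and the uniform bound on the quadratic remainder in $g(s)$, and both are immediate because the chain lives on a finite state space (so $f$, $Gf$, and all iterates $G^kf$ are bounded and $P_s$ is entire in $s$). There is thus no serious obstacle here. One subtlety worth flagging is the behaviour after extinction of the outbreak: when $n_I=0$ both $f$ and $Gf$ vanish, consistent with the process being absorbed, so the absorbing set needs no separate treatment and the expansion holds uniformly in the state.
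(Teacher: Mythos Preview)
Your proof is correct and follows essentially the same route as the paper's own argument: both identify the $O(\Delta^2)$ coefficient of $E_{n_C}[\zeta]$ as $\tfrac12 E_{n_C}[(Gf)(n_C,N_I(t))]$ and then carry out the identical generator computation on $f(n_C,n_I)=\beta(n-n_C)n_I$. The only cosmetic difference is that the paper subtracts the first-order term, divides by $\Delta^2$, and invokes l'H\^opital's rule together with the definition of the generator, whereas you phrase the same step as a Taylor expansion of $s\mapsto E_{n_C}[(P_s f)]$ combined with the backward equation $\tfrac{d}{ds}P_s f=P_s Gf$; these are equivalent formulations of the same limit.
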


\begin{proof} Set $g(n_C,n_I) = \beta(n-n_C)n_I$. Then subtract the $O(\Delta)$ term (\ref{eq:var1}) from  $E_{n_C}[\zeta]$ as defined in  Equation \eqref{eq:ltv}.
\begin{align*}
&E_{n_C}\left[\tilde Y_1\left(\int_t^{t+\Delta}g\big(N_C(u),N_I(u)\big)du\right) - g\big(n_C,N_I(t)\big)\Delta\right] \\
 &= E_{n_C}\left[ \int_t^{t+\Delta}g\big(N_C(u),N_I(u)\big)du - g\big(n_C,N_I(t)\big)\Delta\right] \\
  &= E_{n_C}\left[\int_t^{t+\Delta}\Big(g\big(N_C(u),N_I(u)\big) - g\big(n_C,N_I(t)\big)\Big) du\right] \\
  &=\int_t^{t+\Delta}E_{n_C}\Big[g\big(N_C(u),N_I(u)\big) - g\big(n_C,N_I(t)\big)\Big] du.
\end{align*}

Divide by $\Delta^2$ and take a limit using, successively, l'H\^{o}pital's rule and the definition of the generator. 
\begin{align} \label{eq:d2}
&\lim_{\Delta\to 0}\frac{1}{\Delta^2}\int_t^{t+\Delta}E_{n_C}\Big[g\big(N_C(u),N_I(u)\big) - g\big(n_C,N_I(t)\big)\Big] du \\ \nonumber
&=\lim_{\Delta\to 0}\frac{1}{2\Delta} E_{n_C}\Big[g\big(N_C(t+\Delta),N_I(t+\Delta)\big) - g\big(n_C,N_I(t)\big)\Big]  \\ \nonumber
&= \frac{1}{2}  E_{n_C}\Big[Gg\big(n_C,N_I(t)\big)\Big] =  \frac{1}{2}  E\Big[Gg\big(n_C,N_I(t)\big)\Big|N_C(t) = n_C\Big] 
\end{align}
To evaluate the generator $G$ on $g$, note that
\begin{align*}
g(n_C+1,n_I+1)-g(n_C,n_I)& =\beta(n-n_C-n_I -1)\\ 
g(n_C,n_I-1) -g(n_C,n_I) &= -\beta(n-n_C)
\end{align*}
So,
\begin{align*}
Gg(n_C,n_I) &= \beta(n-n_C)n_I\beta(n-n_C-n_I -1)-\gamma n_I\beta(n-n_C)\\
&=\beta(n-n_C)n_I\big(\beta(n-n_C-n_I -1)-\gamma\big) \\
&=\beta(n-n_C)\Big(\big(\beta(n-n_C-1)-\gamma\big)n_I -\beta n_I^2\Big) \\
&=\beta(n-n_C)\Big(\beta\big((n-n_C-1)-n\frac{1}{R_0}\big)n_I -\beta n_I^2\Big)
\end{align*}
Now, put this in the expression for the limit in Equation \eqref{eq:d2}.
\end{proof}

\begin{theorem} \label{thm:variance} The variance of the incidence over a time interval $\Delta$ is to order $\Delta^2$,
\begin{align*}
   & \frac{1}{n}\hbox{Var}(N_C(t+\Delta) - N_C(t)|N_C(t)=n_C) \\
   \simeq &\ \beta_P(1-c)m_I(c)\Delta \\
   & +\beta_P^2(1-c)\left(\frac{1}{2}\left(\left(1-c - \frac{1}{R_0}\right)m_I(c) - m_I(c)^2\right)+ (1-c)\sigma_I^2(c)\right)\Delta^2\\
   & +O(\Delta^3)
\end{align*}
as $\Delta \to 0$, with equality in the limit as $n\to\infty$.
 \end{theorem}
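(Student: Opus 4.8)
The plan is to insert the two asymptotic ingredients already established into the law of total variance \eqref{eq:ltv}, $\hbox{Var}_{n_C}(\eta) = E_{n_C}[\zeta] + \hbox{Var}_{n_C}(\zeta)$. From the preceding Proposition we have the $O(\Delta^2)$ term of $E_{n_C}[\zeta]$, namely the expression I will call $P_2$ displayed in \eqref{eq:var3}, sitting on top of its $O(\Delta)$ term \eqref{eq:var1} $= \beta(n-n_C)E_{n_C}[N_I(t)]$; and from \eqref{eq:var2} we have the leading ($\Delta^2$) coefficient of $\hbox{Var}_{n_C}(\zeta)$, namely $\beta^2(n-n_C)^2\hbox{Var}_{n_C}(N_I(t))$. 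Since both expansions are smooth in $\Delta$, everything else is $O(\Delta^3)$, so that at fixed $n$,
\[
\hbox{Var}_{n_C}(\eta) = \beta(n-n_C)E_{n_C}[N_I(t)]\,\Delta + \Big(P_2 + \beta^2(n-n_C)^2\hbox{Var}_{n_C}(N_I(t))\Big)\Delta^2 + O(\Delta^3).
\]

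The second step is to divide through by $n$, set $c = n_C/n$ and $\beta_P = \beta n$, and let $n\to\infty$, invoking the limits proved earlier: $\tfrac1n E_{n_C}[N_I(t)]\to m_I(c)$ (Corollary \ref{cor:mac_ICC}), $\tfrac1n\hbox{Var}_{n_C}(N_I(t))\to\sigma_I^2(c)$ (Theorem \ref{thm:limvar} together with Remark \ref{rm:m_I}), and $\tfrac1{n^2}E_{n_C}[N_I(t)^2]\to m_I(c)^2$ — the last because $E_{n_C}[N_I(t)^2] = \hbox{Var}_{n_C}(N_I(t)) + (E_{n_C}[N_I(t)])^2$ with the variance only $O(n)$. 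The $\Delta$ term gives $\beta_P(1-c)m_I(c)\Delta$ immediately. For the $\Delta^2$ term one keeps careful track of the powers of $n$ inside $P_2$: there $\beta^2(n-n_C) = \beta_P^2(1-c)/n$, the factor $n-n_C-1-n/R_0$ is asymptotic to $n(1-c-1/R_0)$, and $E_{n_C}[N_I(t)]$ and $E_{n_C}[N_I(t)^2]$ carry $n$ and $n^2$ respectively, so that $\tfrac1n P_2 \to \tfrac12\beta_P^2(1-c)\big((1-c-\tfrac1{R_0})m_I(c) - m_I(c)^2\big)$; meanwhile $\tfrac1n\,\beta^2(n-n_C)^2\hbox{Var}_{n_C}(N_I(t)) = \beta_P^2(1-c)^2\cdot\tfrac1n\hbox{Var}_{n_C}(N_I(t)) \to \beta_P^2(1-c)^2\sigma_I^2(c)$. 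Adding the two contributions to the $\Delta^2$ coefficient reproduces precisely the bracketed quantity in the statement, and passing $n\to\infty$ turns the ``$\simeq$'' into genuine equality.

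The main obstacle is exactly this $n$-bookkeeping in the $\Delta^2$ coefficient — confirming that, summand by summand in $P_2$, the net power of $n$ is precisely one, so the global division by $n$ leaves a finite nonzero limit — and, bound up with it, that $\tfrac1{n^2}E_{n_C}[N_I(t)^2]$ converges to $m_I(c)^2$ rather than to something larger; this is precisely where Theorem \ref{thm:limvar} (the variance being of order $n$, not $n^2$) is indispensable. A minor point, handled exactly as in Section \ref{app:distributions} and Corollary \ref{cor:mac_ICC}, is that conditioning on $N_C(t) = n_C$ at a deterministic time $t$ differs from evaluating at the hitting time $\tau_{n_C}$ only through the $O(1)$ number of recoveries that occur while $N_C$ is held at $n_C$; this perturbs the conditional mean by $O(1)$ and the conditional variance by $O(\sqrt n)$ — both negligible after dividing by $n$ — so the stated limits for $\tfrac1n E_{n_C}[N_I(t)]$ and $\tfrac1n\hbox{Var}_{n_C}(N_I(t))$ remain valid. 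Finally, one should note that the $\Delta$-expansion is carried out at fixed $n$, so the interchange of the $\Delta\to0$ and $n\to\infty$ limits — implicit in ``equality in the limit as $n\to\infty$'' — requires the $O(\Delta^3)$ remainders to be controlled uniformly enough in $n$.
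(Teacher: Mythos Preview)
Your proof is correct and follows essentially the same route as the paper: decompose via the law of total variance \eqref{eq:ltv}, insert the $O(\Delta)$ and $O(\Delta^2)$ expansions from \eqref{eq:var1}, \eqref{eq:var3}, and \eqref{eq:var2}, divide by $n$, and pass to the large-$n$ limit using Corollary~\ref{cor:mac_ICC} and Remark~\ref{rm:m_I}, handling $E_{n_C}[N_I(t)^2]$ via the mean-plus-variance decomposition exactly as the paper does. You are in fact more scrupulous than the paper in flagging the distinction between conditioning at a fixed time $t$ and at the hitting time $\tau_{n_C}$, and in noting the implicit interchange of the $\Delta\to0$ and $n\to\infty$ limits---points the paper passes over in silence.
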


\begin{proof}
Recall that $\beta_P = n\beta$, $R_0 = \beta_P/\gamma$, and $c=n_C/n$. We take the three expressions (\ref{eq:var1}),  (\ref{eq:var3}), and (\ref{eq:var2}) arising from Equation \ref{eq:ltv} in order.
\medskip

\begin{enumerate}
\item $O(\Delta)$ for $E_{n_C}[\zeta]$.
\begin{align*}
&\frac{1}{n}\beta (n-n_C)E\big[N_I(t)|N_C(t)=n_C\big]\\
=&\beta_P(1-c)E\left[\frac{1}{n}N_I(t)\big|N_C(t)=n_C\right] \to \beta_P(1-c)m_I(c)
\end{align*}
as $n\to\infty$ by the proof of Corollary \ref{cor:mac_ICC}.
\item $O(\Delta^2)$ for $E_{n_C}[\zeta]$.

\begin{align*}
&\frac{1}{2n}\beta^2(n-n_C)\bigg(\Big(n-n_C-1-\frac{n}{R_0}\Big)E_{n_C}[N_I(t)]
- E_{n_C}[N_I(t)^2]\bigg)\\
= & \frac{1}{2} \beta_P^2(1-c)\bigg(\Big(1-c-\frac{1}{n}-\frac{1}{R_0}\Big)E_{n_C}[N_I(t)/n]
- E_{n_C}\big[(N_I(t)/n)^2\big]\bigg)\\
= & \frac{1}{2} \beta_P^2(1-c)\bigg(\Big(1-c-\frac{1}{n}-\frac{1}{R_0}\Big)E_{n_C}[N_I(t)/n]\\
& \qquad \qquad \qquad - \Big(\big(E_{n_C}[N_I(t)/n]\big)^2+\hbox{Var}_{n_C}\big(N_I(t)/n\big)\Big)\bigg)\\
\simeq & \frac{1}{2} \beta_P^2(1-c)\bigg(\Big(1-c-\frac{1}{n}-\frac{1}{R_0}\Big)m_I(c)
- \Big(m_I(c)^2+\frac{\sigma_I^2(c)}{n}\Big)\bigg)\\
\to & \frac{1}{2} \beta_P^2(1-c)\bigg(\Big(1-c-\frac{1}{R_0}\Big)m_I(c)
- m_I(c)^2\bigg)
\end{align*}
where the last two lines stem from Remark \ref{rm:m_I}.

\item   $O(\Delta^2)$ for $\hbox{Var}_{n_C}(\zeta)$.

\begin{align*}
\frac{1}{n}\beta^2(n-n_C)^2\hbox{Var}\big(N_I(t)|N_C(t)=n_C\big)
& = \beta_P^2(1-c)^2 n \hbox{Var}_{n_C}\big(N_I(t)/n\big)\\
& \to \beta_P^2(1-c)^2 \sigma_I^2(c),
\end{align*}
as $n \to \infty$, by Theorem \ref{th:clm}.
\end{enumerate}
\end{proof}

\begin{figure}
\centering
\includegraphics[width=\linewidth]{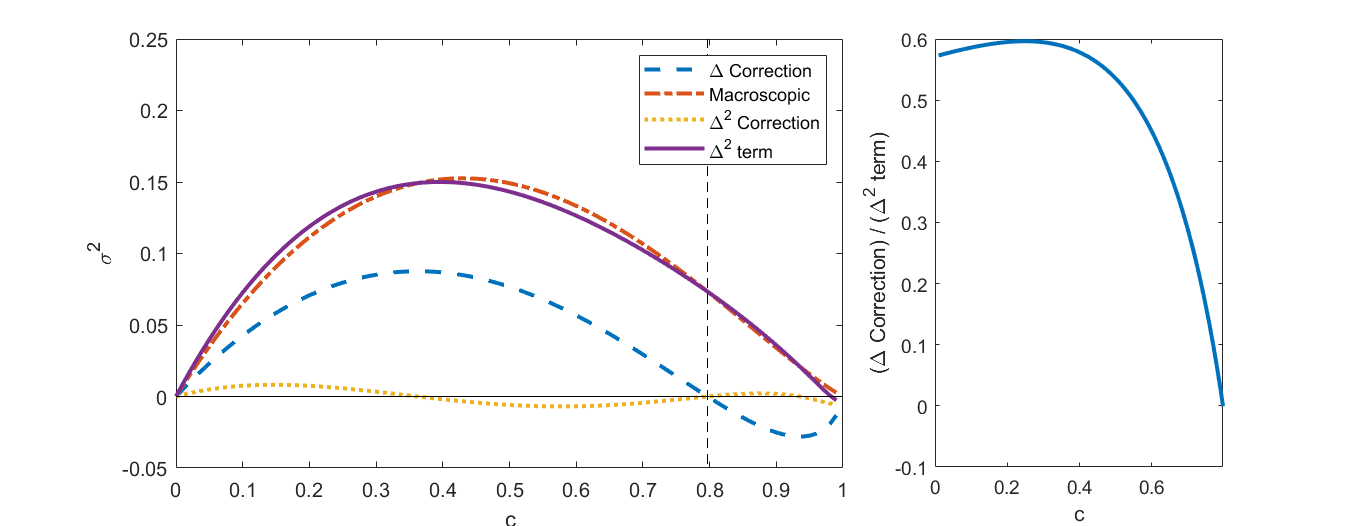}\\
\caption{Left: Graphs for the terms in the variance in the stochastic ICC curve ($R_0=2$). Dashed blue curve times $\beta_P \Delta$ is the first order term from $E_{n_C}[\zeta]$. Dash-dotted red curve times $(\beta_P \Delta)^2$ is the second order term from $\hbox{Var}_{n_C}(\zeta)$. Dotted yellow curve times $(\beta_P \Delta)^2$ is the second order term from $E_{n_C}[\zeta]$. Sum of second order terms shown in solid violet. Right: The graph times $\beta_p\Delta$ is the ratio of the second to the first order terms. }
\label{fig:iccvar}
\end{figure}

\begin{remark} Let's examine the implications for these terms.
\begin{enumerate} 
    \item The first order term in $\Delta$ for $E_{n_C}[\zeta]$ (shown in dashed blue in Figure \ref{fig:iccvar}, left) indicates that over a short time interval, the incidence is dominated by the Poisson arrival of new cases  and thus the variance is $\Delta$ times the ICC curve.
    \item The second order term in $\Delta$ arising from  $\hbox{Var}_{n_C}(\zeta)$ reflects the uncertainty in the number of infected over the time interval under consideration (shown in dash-dotted red in Figure \ref{fig:iccvar}, left). It corresponds to the variance of the macroscopic incidence ${\mathcal I}$.
    \item The second order term in $\Delta$ for $E_{n_C}[\zeta]$ (shown in dotted yellow in Figure \ref{fig:iccvar}, left) is a small perturbation of the second order term in $\hbox{Var}_{n_C}(\zeta)$.
    \item The first order term depends on $\beta_P$ and $\Delta$ through their product, the dimensionless term $\beta_P\Delta$. Correspondingly the second order terms depend on these quantities through $\beta_P^2\Delta^2$, the square of their product.
    \item The ratio of the first and second order terms (shown Figure \ref{fig:iccvar}, right with $R_0=2$) is relatively constant over a large range of values for $c$. For example, for $R_0=2$, this ratio lies between 0.5 and 0.6 for $c \in [0, 0.5]$.
    \item Thus, the first order terms dominates the variance when $\beta_P\Delta \gg \beta_P^2\Delta^2$ or for short time intervals for which $\Delta \ll 1/\beta_P$. The second order term dominates for longer time intervals when these inequalities are reversed. Both terms play a significant role for values of $\Delta$ between these two extremes.
\end{enumerate}
\end{remark}

Figure \ref{fig:varsum} summarizes these results for 20,000 Markov chain simulations, analogous to the results of the complete graph networked simulations of Figure \ref{fig:fig1}. The $\ell_2$ norm of $\sigma^2 = \hbox{Var}({\mathcal I}_\Delta / n)$, where ${\mathcal I}_\Delta = \big(N_C(t+\Delta)-N_C(t)\vert N_C(t) = n_C\big) / \Delta$, is calculated numerically and compared to the expressions shown in Theorem \ref{thm:variance} for different values of $\Delta$. This is a discrete norm since it is estimated at discrete values of $c$. Good agreement is observed for a range of values of $\Delta$, with the macroscopic term, $\hbox{Var}({\mathcal I}_n)$, becoming dominant for larger values of $\Delta$. 
\begin{figure}
\centering
\includegraphics[width=\linewidth]{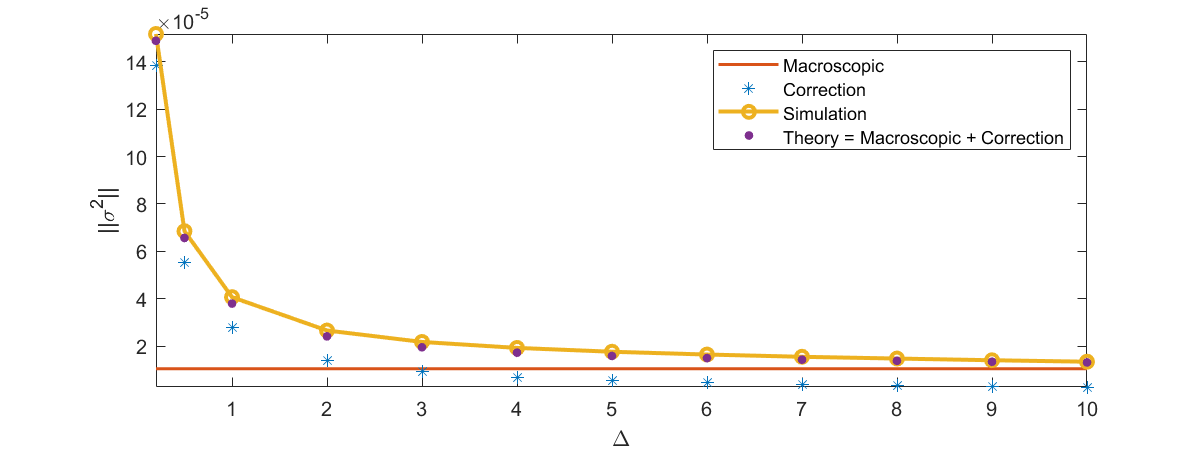}\\
\caption{Norms of the numerically evaluated variance $\sigma^2 = \hbox{Var}({\mathcal I}_\Delta / n)$ (yellow circles), of the macroscopic variance $\hbox{Var}({\mathcal I}_n)$ (solid red curve), and of the correction term to order $\Delta^2$ (blue stars), for 20,000 simulations with $N=10,000$ and different values of $\Delta$. The theoretical estimate described in Theorem \ref{thm:variance} (dots) matches the numerical simulations (yellow circles) over a broad range of values of $\Delta$.}
\label{fig:varsum}
\end{figure}

\section{Relevance of the stochastic SIR model to outbreak data}
\label{app:COVID_AZ}
The relevance of the SIR model to outbreaks is illustrated in Figure \ref{fig:fig2}, which shows the daily COVID-19 incidence in the state of Arizona for the 2020 calendar year, both in the time domain (top row: standard EPI curve) and in the cumulative cases domain (bottom row: ICC curve). The first arrow marks the end of the initial stay at home period (03/19/2020 - 05/15/2020) ordered by the Governor of Arizona \cite{EO2009,EO2018,EO2033}; the second arrow, on August 31st, indicates the end of the first six months of the outbreak (the first two cases were reported in Arizona on 03/04/2020); the third arrow marks the last day the number of cumulative cases in the state was below 300,000. Whereas the spacing between consecutive dates (108 and 83 days respectively) is similar in the time domain (top plot), this is no longer true in the cumulative case domain (bottom plot), which reveals that about twice as many cases were reported between 05/15/2020 and 08/31/2020 than between 08/31/2020 and 11/22/2020. 

\begin{figure}
\centering
\includegraphics[width=\linewidth]{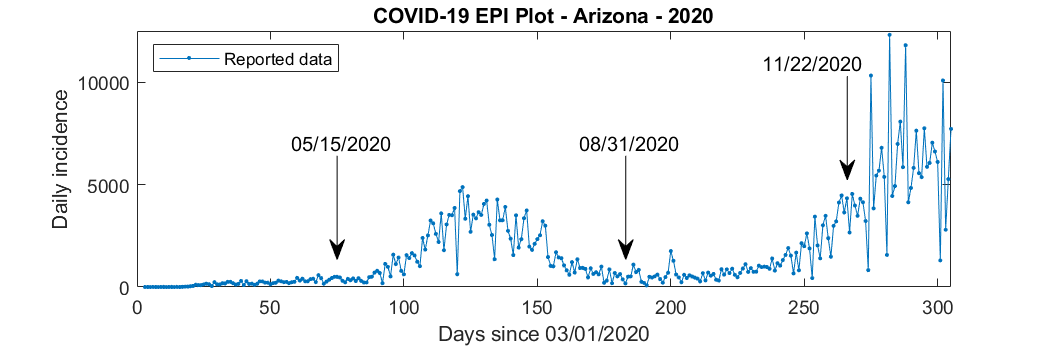}
\includegraphics[width=\linewidth]{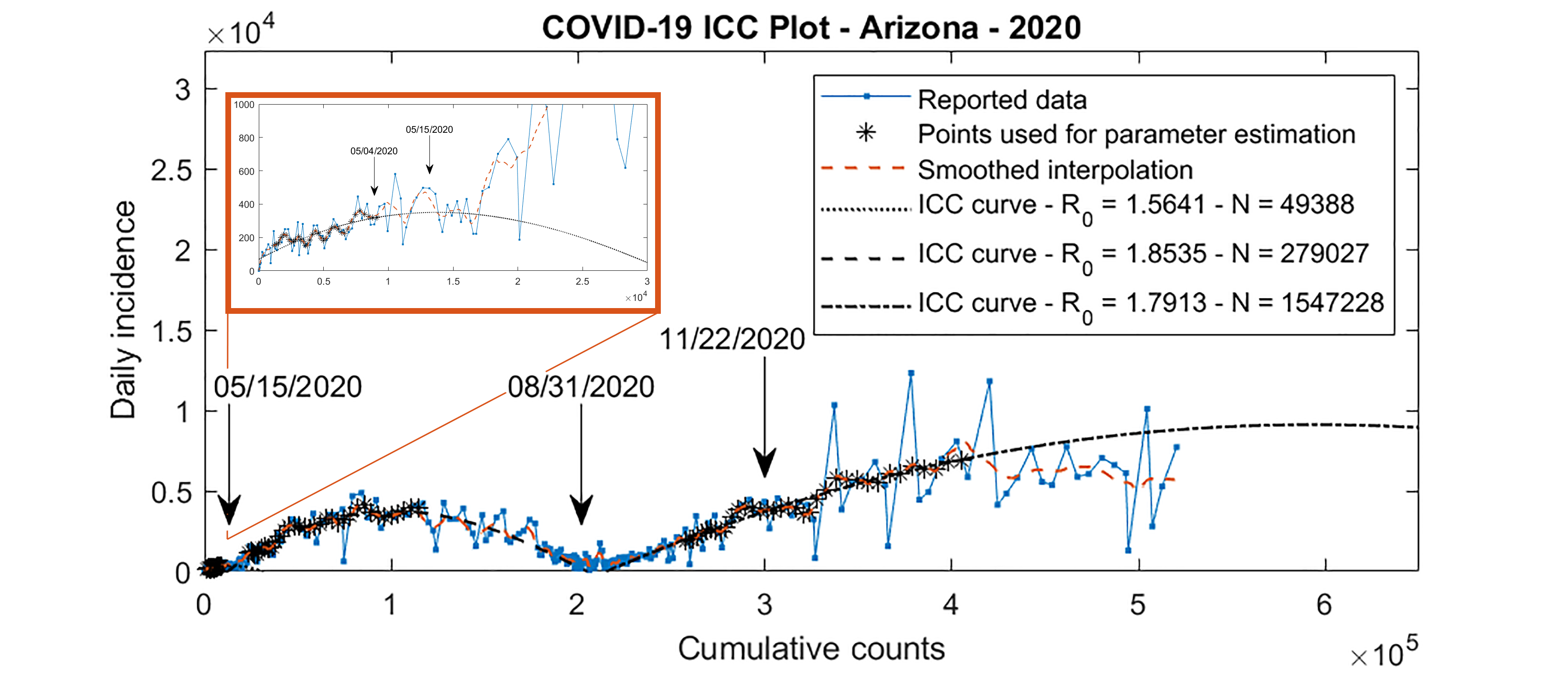}
\caption{COVID-19 outbreak in the state of Arizona in 2020, from March 1st to December 31st. {\bf Top:} Daily incidence as a function of time. {\bf Bottom:} Daily incidence as a function of cumulative cases. The inset magnifies the region with less than 30,000 cumulative cases. The first arrow corresponds to 05/04/2020, when it was announced that the stay at home order would end \cite{EO2034,EO2036} before 05/15/2020 (second arrow). The three waves are well approximated by ICC curves for the SIR model (black solid lines), whose parameters were found using a range (stars) of smoothed incidence values (yellow). The nonlinear relationship between cumulative counts $C$ and time is reflected by the change in spacing between the arrows in the top and bottom plots. COVID-19 case data provided by The COVID Tracking Project at {\it The Atlantic} under a CC BY-4.0 license \cite{CVD}.
\label{fig:fig2}}
\end{figure}

The inset displays an enlargement of the ICC curve for the first 30,000 cases (in the time domain, from 03/04/2020 to 06/10/2020). Three different waves are visible in the bottom panel of Figure \ref{fig:fig2}, each of which is locally well approximated by an ICC curve (in black) of the form 
$\bar {\mathcal I} = N\, G(c,c_0)$, where $c = C/N$, $c_0=C_0/N$, and $G$ is defined in Equation \eqref{eq:ICC}. Recall that $\beta_P$ is the population contact rate of the disease, $R_0$ is the basic reproductive number, and $C_0$ represents initial conditions. In addition, $N$ should be thought of as an effective population size. The parameters used to fit each wave vary, indicating an increase in the effective size $N$ (estimated at 49,388 individuals for the first wave, 279,027 for the second, and 1,547,228 for the third) as the outbreak unfolds, while the basic reproduction number $R_0$ fluctuates between 1.5 and 2 (respective estimates are 1.56, 1.85, and 1.79). The corresponding values of $\beta_P$ and $\gamma = \beta_P / R_0$ are $(\beta_P, \gamma) \simeq (0.12, 0.08)$, $(0.21, 0.11)$, and $(0.16, 0.09)$ respectively. 

Figure \ref{fig:fig2} suggests that each wave of the COVID-19 outbreak in Arizona is, in trend, well captured by the deterministic SIR model: the black curves, of equation $\bar {\mathcal I} = N\, G(c,c_0)$ where $G$ is defined in \eqref{eq:ICC}, are the exact relationship between incidence $\bar{\mathcal I}$ and cumulative cases $C$ for the deterministic SIR model \cite{Lega2020}. In addition, consistent with the results of this manuscript for the stochastic SIR model, each of the three waves appears to be independent from the others, and the daily incidence ${\mathcal I}_\Delta$, $\Delta = 1$, fluctuates about one of the three mean ICC curves.

\section{Conclusions}
\label{sec:conclusions}

Although not surprising from a dynamical systems point of view, the ICC perspective \cite{Lega2016,Lega2020} presents a fundamentally new way of thinking about epidemics. This article develops the corresponding theory for stochastic outbreaks and explains how they relate to deterministic ICC curves. The analysis is done for the stochastic SIR model, which captures the basic tenets of disease spread. We prove that, in the limit of large populations, the dynamics of this model in the ICC plane results from a Gaussian process with independent increments, whose distribution is concentrated about the deterministic ICC curve \eqref{eq:ICC}. The variance of ${\mathcal I}_\Delta$, the incidence over a period of time $\Delta$, is equal to the variance of the macroscopic incidence $\mathcal I$ plus a correction term that depends on $\Delta$, as described in Theorem \ref{thm:variance}. In addition, the relevance of the ICC approach becomes apparent in the nature of the dynamics: the Markov chain and its limit involve a single parameter $R_0$, and the contact rate $\beta_P$ for infections is an ancillary parameter. Both $R_0$ and $\beta_P$ are independent of the population size. In other words, shifting from the human time-centric perspective (in terms of EPI curves) to the pathogen's resource-centric perspective (in terms of ICC curves), isolates ancillary parameters from the statistical analysis of single outbreaks. 

The ability to describe outbreaks as realizations of a Gaussian process with independent increments presents many advantages. First, any outbreak can easily be simulated in the ICC plane as a deterministic time change of Brownian motion, as suggested by Remark \ref{rm:BM}. The discrete equivalent consists in looking at the current number of cumulative cases $C(t)$, drawing the new number of cases ${\mathcal I}_\Delta$ from the appropriate Gaussian distribution, adding this number to $C(t)$, and repeating these steps until no new infection occurs. Second, parameter estimation is simplified: likelihoods naturally factorize into a product of normal densities, leading to a weighted least-square minimization problem in the ICC domain. This is much simpler than the typical MCMC methods used for parameter estimation in the time domain. In addition, Fisher information can be computed explicitly to give confidence regions for model parameters, in contrast to computationally intensive simulation-based approaches. Third, the property of independent increments guarantees that estimates do not depend on the past history of the epidemic, thereby making it possible, in the case of evolving outbreaks, to infer time-dependent parameters from local data in the ICC plane. 

Although the stochastic SIR model provides a simplified description of contagion, we show in Section \ref{app:COVID_AZ} that in the ICC plane, COVID-19 incidence data fluctuate about a finite number of mean ICC curves, each having the same functional form as $G(c)$, obtained from the SIR model. Each of these mean ICC curves corresponds to one wave of the pandemic. We use Arizona as an example, but similar behaviors are observed in other states and other countries. Moreover, the independent increment nature of the process is dramatically illustrated
by these data (see Figure \ref{fig:fig2}). Estimates of $R_0$ and $N$ are entirely informed by the local dynamics of the portion of the epidemic under a given ICC curve. Data associated to the other ICC curves cannot and do not play any role.

The present analysis also shows that ICC curves can address recent challenges raised in the literature regarding time-based analysis of epidemics. In 2020, Juul {\it et al.} \cite{Juul2020} reported on the issues associated with fixed time statistics and the underestimation of extremes in epidemic curve ensembles. ICC curves circumvent many of the shortcomings of fixed time statistics because the stochastic ICC process has independent increments and thus obviates the issues of long-term correlations. In addition, the call for ``curve based'' statistics made in \cite{Juul2020} is integral to the characterization of the epidemic as a realization of a Gaussian process. This makes it possible to incorporate the entire ICC curve in the likelihood associated with any estimation, including for parameter inference, or for detecting the impact of changes -- for instance in people's behavior or due to the introduction of a vaccine, and for forecasting.

In summary, the probabilistic analysis described in the present article equips us with more powerful approaches to understand epidemic dynamics. With a change of perspective from the human to the pathogen, this article shows that the nearly century-old Kermack-McKendrick \cite{Mckendrick1927} mathematical model is again the foundation for modern, even more powerful, analytical tools that yield clearer insights into the nature of an outbreak.
\color{black}

\subsection*{Acknowledgements} We are grateful to Mohammad Javad Latifi Jebelli for insightful conversations about this work. 

\subsection*{Author contributions} FDS and JL conceived of the project. JCW led in deriving the mathematical results. JL and WF coordinated the simulations and numerical results and generated figures. All authors contributed to the writing of the manuscript and approved the final version.

\subsection*{Competing interests} All authors declare that they have no competing interests.

\end{document}